\tikzset{->-/.style={decoration={
			markings,
			mark=at position #1 with {\arrow{latex}}},postaction={decorate}}}
\tikzset{-<-/.style={decoration={
			markings,
			mark=at position #1 with {\arrowreversed{latex}}},postaction={decorate}}}
\tikzset{cross/.style={cross out, draw, 
         minimum size=2*(#1-\pgflinewidth), 
         inner sep=0pt, outer sep=0pt}}
\pgfplotsset{compat = 1.11}
\numberwithin{equation}{section}
\theoremstyle{plain}
	\newtheorem{theorem}{Theorem}
	\newtheorem{proposition}[theorem]{Proposition}
\theoremstyle{definition}
	\newtheorem{example}{Example}
\theoremstyle{remark}
	\newtheorem{remark}{Remark}
\newcommand{\R}{\mathbb R}
\newcommand{\dv}{\mathrm{d}}
\newcommand{\f}{\mathcal F}
\newcommand{\sgn}{\mathrm{sgn}}
\newcommand{\ind}{\mathbbm{1}}
\newcommand{\re}{\mathrm{Re}\,}
\renewcommand{\Re}{\mathrm{Re}\,}
\renewcommand{\Im}{\mathrm{Im}\,}
\title{Biorthogonal ensembles of derivative type}
\author{Tom Claeys}
\email[TC]{[TC]tom.claeys@uclouvain.be}
\address[TC]{Institut de Recherche en Mathématique et Physique, UCLouvain, Chemin du Cyclotron 2, 1348 Louvain-la-Neuve, Belgium}
\author{Jiyuan Zhang}
\email[JZ]{[JZ]jiyuanzhang@scut.edu.cn}
\address[JZ]{School of Mathematics, South China University of Technology, China}
\date{\today}
\begin{document}

\begin{abstract}
In this paper, we prove that biorthogonal ensembles on the real line with a specific derivative structure admit an explicit correlation kernel of double contour integral form. We will demonstrate that this expression is a valuable starting point for asymptotic analysis and that our class of biorthogonal ensembles admits a large variety of limit kernels, by proving that two new classes of limit kernels can occur. The first type is a deformation of the hard edge Bessel kernel which arises in polynomial ensembles describing the eigenvalues of the sum of two random matrices, while the second type arises for Muttalib-Borodin type deformations of polynomial ensembles.
\end{abstract}

\maketitle

\section{Introduction}
 
{\em Biorthogonal ensembles}
\cite{Borodin} are symmetric probability distributions on $\mathbb R^N$, with $N$ a positive integer, which take the form
\begin{equation}\label{BE}
    \frac{1}{Z_N}\det[f_j(x_k)]_{j,k=1}^N\det[g_k(x_j)]_{j,k=1}^N \mathrm d x_1\ldots \mathrm dx_N,\quad x_1,\ldots,x_N\in\R,
\end{equation}
for certain families of functions $f_1,\ldots, f_N$ and $g_1,\ldots, g_N$. 
Such probability distributions are omnipresent in random matrix theory as eigenvalue distributions, but also arise in the study of random partitions, random tilings, random growth phenomena, polymer models and 
last passage percolation.

Biorthogonal ensembles are special cases of $N$-point determinantal point processes. 
Indeed, the $n$-point correlation functions of a biorthogonal ensemble, defined by
\begin{equation}\label{def:correlation}\rho_n(x_1,\ldots, x_n):=\frac{N!}{(N - n)!}\frac{1}{Z_N}\int_{\mathbb R^{N - n}}  
\det[f_j(x_k)]_{j,k=1}^N\det[g_k(x_j)]_{j,k=1}^N
\prod_{k = n +1}^N \mathrm d x_{k},\quad n=1,\ldots, N,
\end{equation}
can be written as a determinant involving a kernel function $K_N$,
\begin{equation}\label{correlationfunction}
\rho_n(x_1,\ldots, x_n)=\det\Big[K_N(x_m,x_k)\Big]_{m,k=1}^n.
\end{equation}
Here, the correlation kernel $K_N$ can be taken of the form 
\begin{equation}\label{kernel}
K_N(x,x')=\sum_{k=1}^N\phi_k(x')\psi_k(x),
\end{equation}
where $\phi_1,\ldots, \phi_N$ have the same linear span as $f_1,\ldots, f_N$, and $\psi_1,\ldots, \psi_N$ have the same linear span as $g_1,\ldots, g_N$, and they satisfy the biorthogonality relations
\begin{equation}\label{biorth}
\int_{\mathbb R}\phi_k(x)\psi_m(x)\mathrm dx=\delta_{km}.
\end{equation}
This explains why such ensembles are called biorthogonal.

Important subclasses of biorthogonal ensembles are {\em polynomial ensembles} \cite{KuijlaarsStivigny}, which correspond to $f_j(x)=x^{j-1}$ with $g_k$ arbitrary, and {\em orthogonal polynomial ensembles}, which correspond to $f_j(x)=x^{j-1}$ and $g_k(x)=x^{k-1}w(x)$ for some function $w$.
Orthogonal polynomial ensembles describe eigenvalue distributions of unitary invariant Hermitian random matrices \cite{Deift}, while examples of polynomial ensembles occur as eigenvalue distributions for certain types of sums and products of random matrices.
In these random matrix models as well as in most of the other statistical physics models connected to biorthogonal ensembles, the main question of interest is to understand the large $N$ asymptotics.
This question can be addressed by a large $N$ asymptotic analysis of the correlation kernel $K_N$, which encodes, through the correlation functions, relevant statistics of the random point configurations $x_1,\ldots, x_N$.

However, so far, large $N$ asymptotic analysis of the correlation kernel has only been successful in special types of biorthogonal ensembles \eqref{BE}. 
A first type consists of biorthogonal ensembles in which exact expressions for the correlation kernel, usually in double contour integral form, are available. Such ensembles arise as eigenvalue distributions in many classical random matrix ensembles such as the Gaussian Unitary Ensemble and Laguerre Unitary Ensemble with external source \cite{BBP, BleherKuijlaars, BrezinHikami, BrezinHikami2, DesrosiersForrester, DiekerWarren, ElKaroui, Johansson,LiuWang}, as well as in eigenvalue distributions of certain sums and products of independent classical random matrices \cite{Forrester2, ForresterLiu, KieburgKuijlaarsStivigny, KuijlaarsZhang, ZyczkowskiSommers,ForresterIpsenLiu,LiuWangWang,LiuA,LiuB}.
In such cases, saddle point methods enable one to study large $N$ asymptotics of the correlation kernel, and to obtain universal limit kernels like the sine, Airy and Bessel kernels, and also more complicated kernels like the Pearcey kernel and Meijer-G function kernels.
A second type consists of biorthogonal ensembles whose correlation kernel does not admit an explicit expression, but can instead be characterized by a Riemann-Hilbert problem which can be analyzed asymptotically. This is the case in orthogonal polynomial ensembles \cite{Deift, DKMVZ1, DKMVZ2}, for which possible limit kernels are now well understood, and also in some multiple orthogonal polynomial ensembles \cite{KuijlaarsMOP,Wang,WangZhang}.

In view of the efforts of many researchers in the past years to extend the range of biorthogonal ensembles which we can analyze asymptotically, it is natural to ask under which conditions a biorthogonal ensemble admits a correlation kernel which can be written in double contour integral form.
We will show that this is the case if (and only if, in some sense) the biorthogonal ensemble possesses a specific derivative structure.
To explain this derivative structure, let us first recall the notion of polynomial ensembles of derivative type, also known as P\'olya ensembles \cite{ForsterKieburgKosters, kieburg17, KK16, KieburgZhang}.

{\em Polynomial ensembles of additive derivative type} are biorthogonal ensembles of the form
\begin{equation}\label{PE}
    \frac{1}{Z_N}\det[x_k^{j-1}]_{j,k=1}^N\det[(-\partial_{x_j})^{k-1}w(x_j)]_{j,k=1}^N \mathrm d x_1\ldots \mathrm d x_N,\quad x_1,\ldots,x_N\in\R,\ Z_N>0,
\end{equation}
for some function $w:\mathbb R\mapsto [0,+\infty)$ which is at least $N-1$ times differentiable in $L^1$-sense. 
Simple examples of such probability measures arise as eigenvalue distributions in classical random matrix models. 
\begin{example}
Set $w(x)=x^{N-1+\nu}e^{-x}\ind_{(0,+\infty)}(x)$ with $\nu\geq 0$. Then
we have \begin{equation*}\partial_{x}^{k-1}w(x)=p_{k-1}(1/x)x^{N-1+\nu}e^{-x}\ind_{(0,+\infty)}(x)\end{equation*} for some polynomial $p_{k-1}$ of degree $k-1$. By performing elementary row operations, we obtain that
\begin{equation*}\det[(-\partial_{x_j})^{k-1}w(x_j)]_{j,k=1}^N=c_N\det[x_j^{k-1}]_{j,k=1}^N \ \prod_{j=1}^N x_j^\nu e^{-x_j}\ind_{(0,+\infty)}(x_j),\end{equation*}
for some constant $c_N$. Recognizing the Vandermonde determinant, we can now rewrite the ensemble \eqref{PE} as 
\begin{equation}\label{PE-LUE}
    \frac{1}{\hat Z_N}\Delta(x)^2\prod_{j=1}^N x_j^\nu e^{-x_j}\mathrm d x_j
    ,\quad x_1,\ldots,x_N>0,\qquad \Delta(x)=\prod_{1\leq j<k\leq N}(x_k-x_j),
\end{equation}
which is (see, e.g., \cite[Section 3.2]{Forrester_book}) the eigenvalue distribution of the $N\times N$ Laguerre Unitary Ensemble (LUE), or Wishart-Laguerre Ensemble, of Hermitian positive-definite random matrices. Notice that for $\nu=0$, $w$ is precisely $N-1$ times differentiable in $L^1$-sense, with the last derivative being integrable but not continuous.
\end{example}

\begin{example} If we set $w(x)=e^{-x^2/2}$, then $\partial_{x}^{k-1}w(x)=p_{k-1}(x)e^{-x^2/2}$ for a polynomial $p_{k-1}$ of degree $k-1$. We then have
\begin{equation*}\det[(-\partial_{x_j})^{k-1}w(x_j)]_{j,k=1}^N=c_N\det[x_j^{k-1}]_{j,k=1}^N \ \prod_{j=1}^N  e^{-x_j^2/2},\end{equation*}
for some $c_N$,
such that \eqref{PE} is 
given by
\begin{equation}\label{PE-GUE}
    \frac{1}{\hat Z_N}\Delta(x)^2\prod_{j=1}^N e^{-x_j^2/2}\mathrm d x_j
    ,\quad x_1,\ldots,x_N\in\R.
\end{equation}
This is (see, e.g., \cite{Forrester_book, Mehta})
the eigenvalue distribution of the $N\times N$ Gaussian Unitary Ensemble (GUE).
\end{example}
While these are the two most classical examples of polynomial ensembles of additive derivative type, others occur as eigenvalue distributions of sums of independent random matrices \cite{kieburg17, ClaeysKuijlaarsWang, KuijlaarsPol} and in dynamical random matrix ensembles \cite{Assiotis, Assiotis2}. The class of functions $w$ for which \eqref{PE} defines a positive measure for every $N\in\mathbb N$ is rather restricted: it contains only those functions $w$ for which 
\begin{equation}
\frac{\det[(-\partial_{x_j})^{k-1}w(x_j)]_{j,k=1}^N}{\Delta(x)}\geq 0,\quad\mbox{for every $N\in\mathbb N$.}
\end{equation}
Such functions are {\em P\'olya frequency functions} of infinite order \cite{karlin,kieburg17} and for this reason, ensembles of the form \eqref{PE} are also called {\em P\'olya ensembles}.

{\em Polynomial ensembles of multiplicative derivative type} are biorthogonal ensembles on $(0,\infty)$ of the form
\begin{equation}\label{PEmult}
    \frac{1}{Z_N}\Delta(y)\det[(-y_j\partial_{y_j})^{k-1}\tilde w(y_j)]_{j,k=1}^N \mathrm d y_1\ldots \mathrm d y_N,\quad y_1,\ldots,y_N>0,
\end{equation}
for some weight function $\tilde w$.
There are again classical random matrix models which give rise to such ensembles; more complicated examples
 arise as eigenvalue distributions for products of random matrices \cite{AkemannIpsenKieburg, AkemannKieburgWei, KK16, KieburgKuijlaarsStivigny, KuijlaarsZhang}.

\begin{example}\label{example:LUEmult}
For $\tilde w(y)=y^\nu e^{-y}$ with $\nu>-1$, \eqref{PEmult} is the joint eigenvalue distribution \eqref{PE-LUE} of the LUE. Curiously, the LUE is thus a polynomial ensemble that has both an additive derivative structure and a multiplicative derivative structure.
\end{example}
\begin{example}\label{example:JUE}
For $\tilde w(y)={y^{\mu}(1-y)^{\nu}\ind_{(0,1)}(y)}$ with $\mu>-1, \nu>N-1$, \eqref{PEmult} is the eigenvalue distribution in the Jacobi Unitary Ensemble (JUE) \cite[Section 3.6]{Forrester_book}.
\end{example}
\begin{example}\label{example:CLUE}
If we set $\tilde w(y)=y^{\beta}(1+y)^{-\beta-\gamma-1}$ with $\beta>-1, \gamma>N-1$, \eqref{PEmult} is the eigenvalue distribution in the Cauchy-Lorentz Unitary Ensemble (C-LUE), see \cite[Section 2.5]{Forrester_book} and \cite{KieburgetalCL}.
\end{example}

We introduce a deformation of \eqref{PE}, in which the Vandermonde determinant $\det[x_k^{j-1}]_{j,k=1}^N$ is modified, of the form
\begin{equation}\label{BEDT}
    \frac{1}{Z_N(a_1,\ldots, a_N)}\frac{\det[e^{a_jx_k}]_{j,k=1}^N}{\Delta(a)}\det[(-\partial_{x_j})^{k-1}w(x_j)]_{j,k=1}^N \mathrm d x_1\ldots \mathrm d x_N,\quad x_1,\ldots,x_N\in\R,
\end{equation}
with 
\begin{equation}\label{def:ZN}
Z_N(a_1,\ldots, a_N)=\int_{\mathbb R^N}\frac{\det[e^{a_jx_k}]_{j,k=1}^N}{\Delta(a)}\det[(-\partial_{x_j})^{k-1}w(x_j)]_{j,k=1}^N \mathrm d x_1\ldots \mathrm dx_N,\end{equation}
for some function $w:\mathbb R\mapsto [0,+\infty)$ which is at least $N-1$ times differentiable in $L^1$-sense, and for distinct real values of $a_1,\ldots, a_N$. If they are not distinct, this distribution is not well-defined because both determinants $\Delta(a)$ and 
$\det[e^{a_jx_k}]_{j,k=1}^N$
vanish. However, 
we observe that
$\frac{\det[e^{a_jx_k}]_{j,k=1}^N}{\Delta(x)\Delta(a)}$ is proportional to the Harish-Chandra-Itzykson-Zuber integral \cite{HC,IZ}, such that it is well-defined and positive for all real $a_1,\ldots, a_N$ and $x_1,\ldots, x_N$.
Thus, the ratio 
$\frac{\det[e^{a_jx_k}]_{j,k=1}^N}{\Delta(a)}$ remains well-defined when taking confluent limits where, for instance, $a_{j_1},\ldots a_{j_k}\to a$, $k\in\{2,\ldots, N\}$, and this results in well-defined ensembles.
Moreover,
\eqref{BEDT} is positive if and only if \eqref{PE} is positive, such that the connection between positivity of the ensemble and P\'olya frequency functions is preserved.
We will call ensembles of the form \eqref{BEDT}, as well as their degenerate confluent limits, {\em biorthogonal ensembles of derivative type}.
There are two notable special cases.
\begin{itemize}\item The fully confluent case of \eqref{BEDT} where $a_1=\cdots=a_N= 0$ is equal to the polynomial ensemble of additive derivative type \eqref{PE}. We will show this in Section \ref{sec:confluent}. 
\smallskip
\item If we set $a_j=j$, change variables $y_j=e^{x_j}$ and set $\tilde w(y)=w(\log y)$, then \eqref{BEDT} is the polynomial ensemble of multiplicative derivative type \eqref{PEmult}.
\end{itemize}
Our class of biorthogonal ensembles of derivative type thus comprises both polynomial ensembles of additive and multiplicative derivative types as special cases.
Biorthogonal ensembles of derivative type which are not polynomial ensembles of derivative type occur in random matrix models with external source.
\begin{example}\label{example:GUE+}
If we set $w(x)=e^{-x^2/2}$, the biorthogonal ensemble of derivative type \eqref{BEDT} is the joint eigenvalue distribution of the eigenvalues of a random matrix in the GUE with external source \cite{BrezinHikami, Johansson}. This is a polynomial ensemble, but it is in general not of derivative type. A random matrix in this ensemble can be written as $M+A$, where $A$ is an $N\times N$ Hermitian matrix with eigenvalues $a_1,\ldots, a_N$, and $M$ is an $N\times N$ GUE matrix.
An associated biorthogonal system can be constructed in terms of multiple Hermite polynomials \cite{BleherKuijlaars}. 
\end{example}
\begin{example}\label{example:LUE+}
If we set $w(x)=x^{N-1+\nu}e^{-x}\ind_{(0,+\infty)}(x)$, \eqref{BEDT} is the eigenvalue distributions of eigenvalues of a random matrix in the LUE with external source $a_1,\ldots, a_N$, also known as Wishart ensemble or ensemble of complex Gaussian sample covariance matrices \cite{BBP, ElKaroui}. Again, this is a polynomial ensemble but in general not of derivative type. Here, a biorthogonal system can be constructed using multiple Laguerre polynomials \cite{BleherKuijlaars}.
\end{example}

General biorthogonal ensembles of derivative type do not arise directly as eigenvalue distributions of random matrices, although there exist methods to construct random matrices with prescribed biorthogonal ensembles as eigenvalue distributions, see e.g.\ \cite{Cheliotis, ForresterWang}.

We will prove that biorthogonal ensembles of derivative type in general admit an {\bf explicit double contour integral representation for their correlation kernel} (see Theorem \ref{thm:algebraic} below). This provides a powerful starting point for large $N$ asymptotic analysis of biorthogonal ensembles of derivative type. Moreover, the class of biorthogonal ensembles of derivative type is rich in view of possible limit kernels that can occur when taking scaling limits. Indeed, we will show that {\bf two new classes of limit kernels} occur, the first one for additive perturbations of the LUE (see Theorem \ref{theorem:LUE+} below), the second one for Mutallib-Borodin deformations of polynomial ensembles of derivative type (see Theorem \ref{thm:MB} below). 

\section{Statement of results}\label{section:results}

\subsection{Double contour integral expression for the correlation kernel}\label{section:resultsalg}
Our first result states that if $w$ is sufficiently smooth and fast decaying at infinity, the associated biorthogonal ensemble \eqref{BEDT} of derivative type admits a correlation kernel of the form
\begin{equation}\label{kernel0}
	K_N(x,x'):=\int_{\Sigma_N}\frac{\dv u}{2\pi i}\int_{c+i\mathbb R}\frac{\dv v}{2\pi i}\frac{W(v)\prod_{j=1}^N(v-a_j)}{W(u)\prod_{j=1}^N(u-a_j)}\frac{e^{-xv+x'u}}{v-u}.
\end{equation}
Here $W$ is an analytic function in a vertical strip $c_-<\Re z<c_+$ of the complex plane containing the imaginary line, $c_-<c<c_+$, and $\Sigma_N$ is a closed positively oriented curve in the strip $c_-<\Re z<c_+$, not intersecting $c+i\mathbb R$, and encircling $a_1,\ldots,a_N$ but not enclosing zeros of $W$, see Figure \ref{figure: contours}.

Conversely, if $W(z)$ is analytic in a sufficiently wide vertical strip in the complex plane satisfying sufficient decay conditions at infinity, then \eqref{kernel0} is the kernel of a biorthogonal ensemble of derivative type, but we should notice that this biorthogonal ensemble does not need to be positive, in general it will be a complex-valued biorthogonal measure.

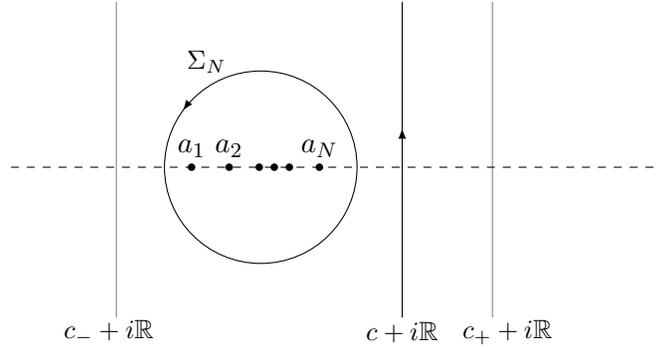
\begin{figure}[t]
\begin{center}
	\begin{tikzpicture}
		\fill (-2,0) circle (0.05cm);
		\node at (-2,0.25) {$a_1$};
		\fill (-1.5,0) circle (0.05cm);
		\node at (-1.5,0.25) {$a_2$};
		\fill (-.9,0) circle (0.05cm);
        \fill (-1.1,0) circle (0.05cm);
        \fill (-.7,0) circle (0.05cm);
        \fill (-0.3,0) circle (0.05cm);
		\node at (-0.3,0.25) {$a_N$};
        \draw[gray] (-3,-2) to [out=90, in=-90] (-3,2.2);
        \draw[gray] (2,-2) to [out=90, in=-90] (2,2.2);
        \draw[black,->-=0.6] (0.8,-2) to [out=90, in=-90] (0.8,2.2);
		\draw[dashed,-=1,black] (-4.4,0) to [out=0, in=-180] (4.4,0);
		\node at (-1.8,1.4) {\small $\Sigma_N$};
		\node at (0.8,-2.2) {\small $c+i\mathbb R$};	
        \node at (-3.1,-2.2) {\small $c_-+i\mathbb R$};	
        \node at (2.2,-2.2) {\small $c_++i\mathbb R$};	
        \draw[->-=0.4,black] (0.2,0) arc (0:360:1.28cm);
\end{tikzpicture}
    \caption{The vertical strip $c_-<\Re z<c_+$, the points $a_1,\ldots, a_N$, and the integration contours $\Sigma_N$ and $c+i\mathbb R$ in \eqref{kernel0}.}
    \label{figure: contours}
\end{center}
\end{figure}

The identification between $w$ (in~\eqref{BEDT}) and $W$ goes via the Fourier transform \begin{equation}
\label{def:Fourier}
\mathcal F[w](t)=\int_{\mathbb R}e^{ixt}w(x)\mathrm dx \quad\mbox{with inverse}\quad \mathcal F^{-1}[f](x)=\frac{1}{2\pi}\int_{\mathbb R}e^{-ixt}f(t)\mathrm dt.
\end{equation}
We have
\begin{equation}\label{def:Ww}
W(z)=\mathcal F[w](-iz)=\int_{\mathbb R}e^{xz}w(x)\mathrm dx,\qquad w(x)=\frac{1}{2\pi}\int_{\mathbb R}e^{-ixt}W(it)\mathrm dt=\frac{1}{2\pi i}\int_{i\mathbb R}e^{xz}W(-z)\mathrm dz.
\end{equation}

Given $c_-<0<c_+$, let us define $\mathcal U_{c_-,c_+}^N$ as the space of all functions $w:\mathbb R\to\mathbb C$ satisfying the following conditions:
    \begin{enumerate}
        \item (regularity) $w$ belongs to the Sobolev space $W^{N-1,1}(\mathbb R)$ of $N-1$ times differentiable functions in $L^1$-sense: $w,w',\ldots, w^{(N-1)}\in L^1(\mathbb R)$;
        \smallskip
        \item (exponential decay) for every $c\in(c_-,c_+)$, there exists $\tilde c>0$ such that
        \begin{align}
            &|w^{(j)}(x)|\leq \tilde c e^{-c x},\qquad x\in\mathbb R,
        \end{align}
        for $j=0,1,\ldots,N-1$.
    \end{enumerate}

As we will show, the first relation in \eqref{def:Ww} maps the function space  $\mathcal U_{c_-,c_+}^N$ into the space of functions $\mathcal V_{c_-,c_+}^N$ containing all complex-valued functions $W$ for which:
\begin{enumerate}
\item (analyticity) $W$ is analytic in the strip $\mathcal S_{c_-,c_+}:=\{z\in\mathbb C:c_-<\Re z<c_+\}$;
\smallskip
\item (integrability) for any $\epsilon>0$, we have for $p=2$ and for $p=\infty$ that 
\begin{equation*}\sup_{c_-+\epsilon<c<c_+-\epsilon}\|z^{N-1}W(z)\|_{L^p(c+i\mathbb R)}<\infty;\end{equation*}
\item (decay) and moreover, for any $c_-<c<c_+$, 
\begin{equation*}\lim_{t\to \pm\infty}(c+it)^{N-1}W(c+it)=0.\end{equation*}
\end{enumerate}

\begin{theorem}
\label{thm:algebraic}
Let $c_-<0<c_+$ and $N\in\mathbb N$.
\begin{enumerate}\item For any $w\in\mathcal U_{c_-,c_+}^N$ and for any distinct $a_1,\ldots, a_N\in\mathcal S_{c_-,c_+}$, the complex-valued biorthogonal ensemble \eqref{BEDT} is well-defined, and it admits the correlation kernel \eqref{kernel0} with $W\in \mathcal V_{c_-,c_+}^N$ given by \eqref{def:Ww}, provided that $W(a_j)\neq 0$ for all $j=1,\ldots, N$. 
Moreover, 
    \begin{equation}\label{ZN0}	Z_N(a_1,\ldots,a_N)=N!\prod_{j=1}^NW(a_j).
    \end{equation}
\item Let $W\in\mathcal V_{c_-,c_+}^{N}$ and let $W$ be moreover such that
\begin{equation*}\sup_{c_-+\epsilon<c<c_+-\epsilon}\|z^{N-1}W(z)\|_{L^1(c+i\mathbb R)}<\infty,\end{equation*}
for any $\epsilon>0$.
For any $a_1,\ldots, a_N\in\mathcal S_{c_-,c_+}$ for which $W(a_j)\neq 0$ for all $j=1,\ldots, N$, the kernel $K_N$ given by \eqref{kernel0} is well-defined; if $a_1,\ldots, a_N$ are distinct, the complex-valued measure $\frac{1}{N!}\rho_N(x_1,\ldots, x_N)dx_1\ldots dx_N$ defined by \eqref{correlationfunction} is equal to \eqref{BEDT} with $w\in \mathcal U_{c_-,c_+}^N$ given by \eqref{def:Ww}; if $a_1=\ldots= a_N=0$,  $\frac{1}{N!}\rho_N(x_1,\ldots, x_N)dx_1\ldots dx_N$ is equal to \eqref{PE} with $w\in \mathcal U_{c_-,c_+}^N$ given by \eqref{def:Ww}.  
\end{enumerate}
\end{theorem}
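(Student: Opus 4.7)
The plan splits into three stages. First, I would establish that the Laplace transform $W(z)=\int e^{xz}w(x)\,dx$ sets up the claimed correspondence $\mathcal U_{c_-,c_+}^{N}\to\mathcal V_{c_-,c_+}^{N}$: the exponential decay condition on $w$ gives analyticity of $W$ in the strip $\mathcal S_{c_-,c_+}$ via dominated convergence, while integrating by parts $N-1$ times (with boundary terms killed by the same decay) converts $z^{N-1}W(z)$ on a vertical line $c+i\R$ into the Fourier transform of $e^{c\cdot}(-\partial)^{N-1}w$, so that the $L^p$ bounds ($p=2$ via Hausdorff--Young, $p=\infty$ via a direct absolute estimate) and the pointwise decay (Riemann--Lebesgue) follow at once from $w\in W^{N-1,1}(\R)$. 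The inverse direction, needed in part (2), uses the Fourier inversion formula together with analogous estimates.

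Second, in the generic case of distinct $a_j$'s I would compute both $Z_N$ and the kernel by the same Gram-matrix machinery. Andréief's identity applied to \eqref{def:ZN} reduces the $N$-fold integral to $\frac{N!}{\Delta(a)}\det G$ with $G_{jk}=\int e^{a_jx}(-\partial_x)^{k-1}w(x)\,dx$; integration by parts collapses this to $a_j^{k-1}W(a_j)$, so $G=\diag(W(a_j))\cdot[a_j^{k-1}]_{j,k=1}^N$ factors into a diagonal matrix times a Vandermonde of determinant $\prod_j W(a_j)\cdot\Delta(a)$, yielding \eqref{ZN0}. For the correlation kernel, the standard biorthogonal-ensemble formula gives $K_N(x,x')=\sum_{j,l}(G^{-1})_{lj}\,e^{a_jx'}(-\partial_x)^{l-1}w(x)$. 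Inserting the inverse-Laplace representation $(-\partial_x)^{l-1}w(x)=\frac{1}{2\pi i}\int_{c+i\R}v^{l-1}e^{-xv}W(v)\,dv$ and using the Lagrange-interpolation identity $\sum_{l=1}^N v^{l-1}(V^{-1})_{lj}=\prod_{l\neq j}(v-a_l)/(a_j-a_l)$ collapses the $l$-sum. The remaining $j$-sum is precisely $\sum_j\mathrm{Res}_{u=a_j}F(u)$ with $F(u)=\frac{e^{x'u}\prod_l(v-a_l)}{W(u)(v-u)\prod_l(u-a_l)}$, which under the specified contour $\Sigma_N$ (enclosing the $a_j$'s, avoiding the zeros of $W$ and the line $c+i\R$, possible because $W(a_j)\neq 0$) equals $\oint_{\Sigma_N}\frac{du}{2\pi i}F(u)$. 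A Fubini interchange produces \eqref{kernel0}.

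For part (2) I would run the same computation backwards: the added $L^1$ hypothesis on $z^{N-1}W(z)$ secures absolute convergence of the double contour integral, the residue calculus used above reads equally well in the other direction, and one recovers the Gram-matrix expression and hence the density \eqref{BEDT} with $w$ the inverse transform of $W$. The fully confluent case $a_1=\cdots=a_N=0$ will be handled separately in Section~\ref{sec:confluent}. The main obstacle is not algebraic---the Lagrange/residue identity underlying the whole computation is clean---but analytic: the integration-by-parts, Fubini, and contour-deformation steps must all be justified using only the regularity and decay built into $\mathcal U_{c_-,c_+}^{N}$ and $\mathcal V_{c_-,c_+}^{N}$, and in particular one must verify that the resulting kernel is independent of the admissible choice of $c\in(c_-,c_+)$ and of $\Sigma_N$ within its homotopy class.
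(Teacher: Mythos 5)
Your proposal is correct and follows essentially the same route as the paper: the Laplace/Fourier correspondence between $\mathcal U_{c_-,c_+}^N$ and $\mathcal V_{c_-,c_+}^N$ via integration by parts, Plancherel and Riemann--Lebesgue; Andr\'eief's identity plus integration by parts for \eqref{ZN0}; and the residue identification of the finite sum over the $a_j$'s with the contour integral over $\Sigma_N$. The only cosmetic difference is that you derive the kernel from the inverse Gram matrix and the Lagrange interpolation identity, whereas the paper writes the resulting biorthogonal system $\phi_n(x)=e^{a_nx}$, $\psi_m(x)=\frac{1}{\widetilde W_N'(a_m)}\int_{c+i\mathbb R}\frac{\widetilde W_N(v)e^{-xv}}{v-a_m}\frac{\dv v}{2\pi i}$ explicitly and cites prior work for its biorthogonality --- these are the same computation.
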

\begin{remark}
Just like for positive biorthogonal ensembles, when we say that $K_N$ is a correlation kenel of a complex-valued biorthogonal ensemble, we mean that the correlation functions (or marginal densities) defined by \eqref{def:correlation}  are equal to \eqref{correlationfunction}.
Signed biorthogonal ensembles occur for instance in polymer models, see \cite{CC24}.
\end{remark}
\begin{remark}\label{remark:contourdef}
Let $a_1,\ldots, a_N\in \mathcal S_{c_-,c_+}$ and let $\Sigma_N$ enclose $a_1,\ldots, a_N$ without enclosing zeros of $W$. Then, for $W\in\mathcal V_{c_-,c_+}$, it is straightforward to check that \eqref{kernel0} is independent of $c$, as long as $c+i\mathbb R$ does not intersect $\Sigma_N$. For any $c\in(c_-,c_+)$ such that $c+i\mathbb R$ lies to the right of $\Sigma_N$, this independence follows from the analyticity and decay of $W$ by a contour deformation argument. If $c+i\mathbb R$ intersects $\Sigma_N$, the $v$-integral in \eqref{kernel0} picks up a residue at $u$, whose $u$-integral can be computed explicitly. For example, if $\Sigma_N$ and $c+i\mathbb R$ intersect at $c\pm i\alpha$ with $\alpha>0$, we obtain in this way that
\begin{equation}\label{kernelintersect0}K_N(x,x'):=\int_{\Sigma_N}\frac{\dv u}{2\pi i}\int_{c+i\mathbb R}\frac{\dv v}{2\pi i}\frac{W(v)\prod_{j=1}^N(v-a_j)}{W(u)\prod_{j=1}^N(u-a_j)}\frac{e^{-xv+x'u}}{v-u} + e^{c(y-x)}\frac{\sin\alpha(x-y)}{\pi(x-x')}.\end{equation}
If $c+i\mathbb R$ lies to the left of $\Sigma_N$, this residue contribution disappears again upon evaluation of the $u$-integral.
\end{remark}
\begin{remark}
In the fully confluent case where $a_1=\cdots=a_N=0$, we can also re-write \eqref{kernelintersect0} (again using a residue computation)
as
\begin{equation*}\label{kernelintersect}K_N(x,x'):=\oint_{\Sigma_N}\frac{\dv u}{2\pi i}\int_{c+i\mathbb R}\frac{\dv v}{2\pi i}\frac{W(v)}{W(u)}\left(\frac{v^N}{u^N}-1\right)\frac{e^{-xv+x'u}}{v-u}.\end{equation*}
It is easily seen that this is equal to
\begin{equation}\label{kieburg_kernel}
        K_N(x,x')=\oint_\mathcal C\frac{\dv u}{2\pi }\int_{-\infty}^{+\infty}\frac{\dv v}{2\pi }\frac{\mathcal F[w](v)}{\mathcal F[w](u)}\left(1-\left(\frac{v}{u}\right)^N\right)\frac{e^{i(xu-x'v)}}{u-v},
    \end{equation}
    where the contour $\mathcal C$ of $u$ is encircling $0$ in the positive direction. The latter expression was obtained in \cite[Corollary III.3]{kieburg17} for the correlation kernel of any orthogonal polynomial ensemble of additive derivative type.
\end{remark}

\begin{remark}If not all $a_1,\ldots, a_N$ are distinct and not all of them are equal, the kernel $K_N$ still defines a biorthogonal measure, and we have explicit expressions for it. See Section \ref{sec:confluent} below for details.
\end{remark}
\begin{remark}\label{remark:scaleshift}
The identification \eqref{def:Ww} between $w$ and $W$ behaves well under scaling and shifting. Suppose that $w\in\mathcal U_{c_-,c_+}^N$, with corresponding function $W\in\mathcal V_{c_-,c_+}^N$. Define the shifted and scaled weight function
\begin{equation*}w_{c_0,x_0}(x)=\frac{1}{c_0}w\left(\frac{x-x_0}{c_0}\right),\end{equation*} with $c_0\in\mathbb R\setminus\{0\}$, $x_0\in\mathbb R$, such that $w_{c_0,x_0}$ belongs to $\mathcal U_{c_-/c_0,c_+/c_0}^N$ if $c_0>0$ and to $\mathcal U_{c_+/c_0,c_-/c_0}$ if $c_0<0$.
The associated function $W_{c_0,x_0}$ is given by
\begin{equation*}W_{c_0,x_0}(z)=\int_{-\infty}^{+\infty}w_{c_0,x_0}(x)e^{xz}\mathrm d x=\int_{-\infty}^{+\infty}w(y)e^{x_0z+c_0yz_0}\mathrm d y =e^{x_0 z}W\left(c_0z\right).\end{equation*} This simple property will be useful later.
\end{remark}
\begin{remark}
The double contour integral expression \eqref{kernel0} is similar to a double contour integral expression for the correlation kernel of Schur measures \cite{Okounkov, Okounkov2}. In the latter, both integrations are over closed curves in the complex plane, which is related to the fact that the Schur measures live on a discrete lattice rather than on the real line. In \cite[Section 4]{BorodinPeche}, certain continuous versions of Schur measures were studied, with correlation kernels of the same form as \eqref{kernel0}, but in the special case where $1/W$ is a polynomial.
\end{remark}
\begin{example}\label{example:GUE+ and LUE+}
The GUE with external source corresponds to \eqref{BEDT} with $w(x)=e^{-x^2/2}$, such that $w\in\mathcal U_{c_-,c_+}^N$ for every $c_-<0<c_+$ and for every $N\in\mathbb N$. Thus, through \eqref{def:Ww}, $W$ is given by \begin{equation*}W_{\rm GUE}(z)=e^{z^2/2}\end{equation*} and belongs to $\mathcal V_{c_-,c_+}^N$ for every $c_-<0<c_+$ and for every $N\in\mathbb N$.
The $N\times N$ LUE with external source corresponds to \eqref{BEDT} with $w(x)=x^{N-1+\nu}e^{-x}\ind_{(0,+\infty)}(x)$, such that $w\in\mathcal U_{c_-,c_+}^N$
for any $c_-<0<c_+<1$.
We then have \begin{equation*}W_{\rm LUE}^{(\nu)}(z)=\frac{1}{(z-1)^{N+\nu}},\end{equation*} and $W\in\mathcal V_{c_-,c_+}^N$ for $c_-<0<c_+<1$.
Notice that for $\nu=0$, the decay of $W$ is just sufficiently strong for it to belong to $\mathcal V_{c_-,c_+}^N$.
\end{example}
\begin{example}
In the multiplicative case $a_j=j$, it is natural to change variables by setting \begin{equation*}\tilde K_N(y,y')=\frac{1}{y} K_N(\log y, \log y'),\qquad x,y>0.\end{equation*}
This modified correlation kernel is, by \eqref{kernel0}, equal to
\begin{equation*} \tilde K_N(y,y')=\int_{\Sigma_N}\frac{\dv u}{2\pi i}\int_{c+i\mathbb R}\frac{\dv v}{2\pi i}\frac{W(v)\prod_{j=1}^N(v-a_j)}{W(u)\prod_{j=1}^N(u-a_j)}\frac{y^{-v-1}(y')^{u}}{v-u}.\end{equation*}
Then the identification between $W(z)$ and $\tilde w(y)$ in \eqref{PEmult} goes via the Mellin transform,
\begin{equation}\label{def:Wwtilde}
W(z)=\mathcal M[w](z)=\int_{0}^{\infty}y^{z-1}\tilde w(y)\mathrm dy,\qquad \tilde w(y)=\frac{1}{2\pi i}\int_{i\mathbb R}y^{z}W(-z)\mathrm dz.
\end{equation}
In the LUE, JUE and C-LUE from Examples \ref{example:LUEmult}, \ref{example:JUE} and \ref{example:CLUE}, we have the following functions $W$,
\begin{align}\label{ensembles}
    &W^{(\nu)}_\text{LUE*}(s)={\Gamma(\nu+s)},\quad W^{(\mu,\nu)}_\text{JUE}(s)={B(\mu+s,\nu+1)},\quad W^{(\beta,\gamma)}_\text{C-LUE}(s)={B(\nu+s,\mu-s+1)},
\end{align}
in terms of the Gamma and Beta functions. In all three ensembles we require $\nu>-1$ for the integrability of the functions $w$. Furthermore for the weight functions $w$ of JUE and C-LUE to be integrable, it is necessary to require $\mu>N-1$. So the functions $W$ corresponding to both of these ensembles are $N$-dependent. We also remark here that compared with the expressions in~\cite{KK16}, a shift $\frac{n-1}{2}$ for $s$ has been made due to different conventions of the spherical transform, and the normalization constant has been omitted as it can be canceled out in the ratio of the double integral formula~\eqref{kernel0}.
\end{example}

\begin{example}
General P\'olya frequency density $w$ of infinite order can be characterized in terms of their Fourier transform, or in terms of the associated functions $W$ through \eqref{def:Ww}. They have the form
\begin{equation}\label{def:Polyafreq}
	W(z)=\mathcal F[w](-iz)=ce^{\tau z^2/2+\gamma z}\prod_{j=1}^\infty\frac{e^{-b_j z}}{1-b_j z},
\end{equation}
with the restrictions of parameters
\begin{equation}\label{eq:Polyaparameters}
	\tau\ge 0,\qquad\gamma\in\R,\qquad b_j\in\R, \qquad 0<\tau+\sum_{j=1}^\infty b_j^2<\infty.
\end{equation}
Note that the definition of P\'olya frequency functions in~\cite{karlin} looks slightly different from what we have, namely they have the more general form
\begin{equation}
    W(z)=\mathcal F[w](-iz)=cz^r e^{\tau z^2/2+\gamma z}\prod_{j=1}^\infty\frac{e^{-b_j z}}{1-b_j z}.
\end{equation} However, we require $w$ to be integrable, such that $W(0)=\int_{-\infty}^{+\infty}w(x)\dv x\in (0,+\infty)$, which is only possible if $r=0$. We will prove in Proposition \ref{lem_Polya_bd} that  functions of the form \eqref{eq:Polyaparameters} belong to $\mathcal V_{c_-,c_+}^N$ with $c_-=\max_{b_j<0} b_j^{-1}$ and $c_+=\min_{b_j>0} b_j^{-1}$ provided that $\tau>0$ or $\tau=0$ with at least $N$ non-zero $b_j$-values. If there are no positive $b_j$-values, we can take $c_+>0$ arbitrarily, and if there are no negative $b_j$-values, we can take $c_-<0$ arbitrarily.
As the corresponding functions $w$ give rise to non-negative biorthogonal ensembles of derivative type, this provides us with a practical method to construct such ensembles.
The special case $b_j=0$ for all $j$ is (up to a shift by $\gamma$ and scaling with $\tau$) the GUE with external source, while $\tau=0$, $b_1=\ldots=b_{N+\nu}=1$ and $b_j=0$ for $j>N+\nu$ is the LUE with external source (again, up to shifting and scaling), with an integer parameter $\nu$.
\end{example}

\subsection{Limit kernels for additive perturbations of LUE}\label{section:resultsLUE+}

For our next result, we return to polynomial ensembles of additive derivative type as given in \eqref{PE}. They have the remarkable property of being closed under additive convolution.
More precisely, suppose that $A_1$ and $A_2$ are independent $N\times N$ Hermitian random matrices whose eigenvalue distributions take the form \eqref{PE} with $w=w_1$ for $A_1$ and $w=w_2$ for $A_2$, with $w_1,w_2\in\mathcal U_{c_-,c_+}^N$.
Then, a special case of \cite[Theorem II.4]{kieburg17}
states that the eigenvalue distribution of $A_1+A_2$ is again a polynomial ensemble of derivative type \eqref{PE}, now with $w$ being the (additive) convolution of $w_1$ and $w_2$,
\begin{equation}\label{def:convolution}
w(x)=(w_1*w_2)(x):=\int_{\mathbb R}w_1(t)w_2(x-t)\mathrm d t.
\end{equation}
As we will explain in detail in Section \ref{section:sum}, this fact in combination with Theorem \ref{thm:algebraic} induces a simple transformation on the level of correlation kernels: the correlation kernels of $A_1$ and $A_2$ have the form \eqref{kernel0} with $W=W_1$ and $W=W_2$ given in terms of $w=w_1$ and $w=w_2$ by \eqref{def:Ww}, and the correlation kernel of $A_1+A_2$ has the form \eqref{kernel0} with $W=W_1W_2$. A similar property holds for polynomial ensembles of multiplicative derivative type \cite[Corollary 3.4]{KK16}.


This property enables us to study the correlation kernel in several interesting random matrix models.
\begin{example}\label{example:LUE+LUE}
Let $A_1$ be an $N\times N$ LUE matrix, such that its eigenvalue distribution is \eqref{PE} with $w(x)=w_1(x)=x^{N-1+\nu_1}e^{-x}\ind_{(0,+\infty)}(x)$ and $\nu_1\geq 0$, and let $A_2$ be an $N\times N$ LUE matrix corresponding to $w(x)=w_2(x)=x^{N-1+\nu_2}e^{-x}\ind_{(0,+\infty)}(x)$, independent of $A_1$.
The associated functions $W_1$ and $W_2$ defined by \eqref{def:Ww} 
are given by
\begin{equation*}W_1(z)=\frac{1}{(z-1)^{N+\nu_1}},\qquad W_2(z)=\frac{1}{(z-1)^{N+\nu_2}}.\end{equation*}
It follows that the eigenvalue distribution of the sum $A_1+A_2$ is the polynomial ensemble of derivative type associated to 
\begin{equation*}W(z)=W_1(z)W_2(z)=\frac{1}{(z -1)^{2N+\nu_1+\nu_2}}.\end{equation*}
This is the eigenvalue distribution of a single LUE matrix with parameter $\nu=N+\nu_1+\nu_2$. This is a classical result, see e.g.\ \cite[Section 7.3.2]{Anderson}.
\end{example}

\begin{example}\label{example:GUE+LUE}
Let $A_1$ be an $N\times N$ LUE matrix, such that its eigenvalue distribution is \eqref{PE} with $w(x)=w_1(x)=x^{N-1+\nu}e^{-x}\ind_{(0,+\infty)}(x)$ and $\nu\geq 0$, and with corresponding 
function $W_1$ given by
$W_1(z)=\frac{1}{(z-1)^{N+\nu}}$.
Let $A_2$ be an $N\times N$ GUE matrix independent of $A_1$, such that its eigenvalue distribution is \eqref{PE} with $w(x)=e^{-x^2/2}$.
The scaled GUE matrix $\tau A_2$, $\tau>0$, then has eigenvalue distribution (recall Remark \ref{remark:scaleshift}) corresponding to 
\begin{equation*}w(x)=w_{2}(x)=\frac{1}{\tau}e^{-\frac{1}{2\tau^2}x^2} ,\qquad W_{2}(z)=e^{\frac{\tau^2 z^2}{2}}.\end{equation*}
It follows that the eigenvalue distribution of $A_1+\tau A_2$ is the polynomial ensemble \eqref{PE} with $w=w_1*w_2$, and moreover it admits the correlation kernel
\begin{equation}\label{kernelGUELUE}
    K_N^{\rm LUE+GUE}(x,x')=\int_{\Sigma}\frac{\dv u}{2\pi i}\int_{c+i\mathbb R}\frac{\dv v}{2\pi i}\frac{v^N(1-u)^{N+\nu}}{u^N(1-v)^{N+\nu}}\frac{e^{\frac{\tau^2v^2}{2}-xv-\frac{\tau^2u^2}{2}+x'u}}{v-u},
\end{equation}
where $\Sigma$ is a simple positively oriented closed contour enclosing $0$, and $c>0$ is such that $c+i\mathbb R$ lies at the right of $\Sigma$. This same expression for the correlation kernel was derived in \cite[Equation (7.14)]{CC24}.
Another double contour integral expression was given in \cite[Theorem 2.3]{ClaeysKuijlaarsWang} for the additive convolution of a polynomial ensemble with a GUE matrix, but it is less explicit in the sense that the integrand contains a correlation kernel of the LUE.
\end{example}
As a generalization of Example \ref{example:GUE+LUE}, we now consider $A_1$ an $N\times N$ LUE matrix, perturbed by a random matrix $\tau A_2$, where $\tau>0$ and $A_2$
is a random Hermitian $N\times N$ matrix whose eigenvalue distribution is a polynomial ensemble of derivative type \eqref{PE}, for some function $w\in \bigcup_{N=1}^\infty\mathcal U_{c_-,c_+}^N$, with $c_-<0<c_+$. 
The non-negativity of the eigenvalue distribution implies that $w$ is a P\'olya frequency density of infinite order, with its corresponding function $W$ being of the form \eqref{def:Polyafreq} with $\tau>0$ or with $\tau=0$ and an infinite number of non-zero $b_j$-values.
Then, the eigenvalue distribution of $A_1+\tau A_2$ has {\em perturbed LUE} correlation kernel 
\begin{equation}\label{kernelLUEPolya}	K_N^{\rm pLUE}(x,x';\tau):=\int_{\Sigma}\frac{\dv u}{2\pi i}\int_{c+i\mathbb R}\frac{\dv v}{2\pi i}\frac{v^N(1-u)^{N+\nu}W\left(\tau v\right)}{u^N(1-v)^{N+\nu}W\left(\tau u\right)}\frac{e^{-xv+x'u}}{v-u},
    \end{equation}
where $\Sigma$ is a simple positively oriented closed contour enclosing $0$, and  $c+i\mathbb R$ with $c_-/\tau<c<1$ is such that it does not intersect with $\Sigma$.

We prove that this kernel admits a new hard edge limit kernel near $0$, which can be seen as a deformation of the Bessel kernel. 

\begin{theorem}\label{theorem:LUE+}
Let $c_-<0<c_+$ and let $W$ be independent of $N$ and belonging to $\mathcal V_{c_-,c_+}^N$ for every $N\in\mathbb N$.
Setting $\tau=\frac{r}{4N}$, the kernel \eqref{kernelLUEPolya} admits the scaling limit
    \begin{equation}
	\lim_{N\to\infty}\frac{1}{N}K_N^{\rm pLUE}\left(\frac{x}{4N},\frac{x'}{4N};\frac{r}{4N}\right)=\mathbb K^{\rm pBessel}(x,x';r)
    \end{equation}
    where
\begin{equation}\label{KLUE+0}
	\mathbb K^{\rm pBessel}(x,x';r):=\int_{\tilde\Sigma}\frac{\dv s}{2\pi i}\int_{c+i\mathbb R}\frac{\dv t}{2\pi i}\frac{(-s)^\nu}{(-t)^\nu}\frac{W(rt)}{W(rs)}\frac{\exp(\frac{1}{4t}-xt-\frac{1}{4s}+x's)}{t-s}.
    \end{equation}
    Here $c_-/r<c<0$, and $\tilde\Sigma$ is a positively oriented circle with center $c/3$ and radius $|c|/3$, passing through $0$, see Figure \ref{figure: contourspBessel}. The branch cut of the $\nu$-th power function is taken to be the negative real line so that both $s$ and $t$ have a branch cut at the positive real line, which does not enclose either of the contours.
\end{theorem}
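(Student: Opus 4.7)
The plan is to change variables, deform contours, and apply dominated convergence. First I would substitute $u = 4Ns$ and $v = 4Nt$ in the double integral defining $K_N^{\rm pLUE}(x/(4N), x'/(4N); r/(4N))$. Under this substitution the $W$-factors become $W(rt)$ and $W(rs)$, the exponential becomes $e^{-xt+x's}$, the difference $v-u$ becomes $4N(t-s)$, and the Jacobian contributes $(4N)^2$. Using the factorization $(1-v)^{N+\nu} = (-v)^{N+\nu}(1-1/v)^{N+\nu}$ (and its analogue for $1-u$), the rational factor simplifies to
\begin{equation*}
\frac{v^N(1-u)^{N+\nu}}{u^N(1-v)^{N+\nu}} = \left(\frac{s}{t}\right)^\nu \cdot \frac{(1-1/(4Ns))^{N+\nu}}{(1-1/(4Nt))^{N+\nu}},
\end{equation*}
where the factors $(-1)^{N+\nu}(4N)^{N+\nu}$ cancel between numerator and denominator. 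The expansion $(1-1/(4Nz))^{N+\nu} = \exp(-1/(4z) + O(1/N))$ then gives the pointwise limit $(s/t)^\nu e^{1/(4t) - 1/(4s)}$ for $s, t \neq 0$, so the integrand converges pointwise to that of $\mathbb K^{\rm pBessel}$.

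Next I would deform the contours of integration to their limit positions. The $t$-contour is easily moved to a fixed vertical line $c + i\mathbb R$ with $c \in (c_-/r, 0)$ using the analyticity and decay of $W$ in $\mathcal V_{c_-,c_+}^N$ along vertical lines. The $s$-contour is more delicate, because after scaling it is a tiny loop around $s=0$ whereas $\tilde\Sigma$ passes through $s=0$, a point where the pre-limit integrand has an order-$N$ pole coming from the unfactorized form $(t/s)^N$. I would introduce an auxiliary contour $\tilde\Sigma_\delta$ obtained from $\tilde\Sigma$ by replacing the small arc near $s=0$ by a semicircle of radius $\delta>0$ bending inward, so that $\tilde\Sigma_\delta$ still encloses $s=0$ and thus is homotopic to the scaled original contour in the complement of the singularity (here it matters that $W$ is nonvanishing, so $1/W(rs)$ has no poles, and that $s = 1/(4N)$, the only zero of the numerator factor, lies strictly to the right of $\tilde\Sigma_\delta$).

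Finally I would apply dominated convergence on $\tilde\Sigma_\delta$ and send $\delta \to 0$. For fixed $\delta > 0$, on $\tilde\Sigma_\delta$ one has $|s|\geq \delta$, so $|1/(4Ns)| \leq 1/(4N\delta) < 1$ for $N$ large, and $|(1-1/(4Ns))^{N+\nu}|$ is uniformly bounded in $N$; together with the decay of $W(rt)$ along vertical lines and the boundedness of $(s/t)^\nu$ and $1/(t-s)$, this provides a dominating function and yields the convergence of the integral over $\tilde\Sigma_\delta$ to the corresponding integral of the limit kernel. Sending $\delta \to 0$ then uses that the limit integrand is absolutely integrable on $\tilde\Sigma$, a consequence of the remarkable property that $\tilde\Sigma$ is precisely the circle through $0$ on which $\Re(1/s) = 3/(2c)$ is constant, so that $|e^{-1/(4s)}|$ is uniformly bounded on $\tilde\Sigma$ despite the essential singularity at $s=0$. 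The main obstacle is this combined dominated-convergence step: one must simultaneously control the pre-limit integrand on $\tilde\Sigma_\delta$ uniformly in $N$ (balancing the potentially large factor $(1-1/(4Ns))^{N+\nu}$ against the vanishing $(s/t)^\nu$), verify integrability of the limit kernel on $\tilde\Sigma$ despite the contour touching the essential singularity, and show that the inward bump of $\tilde\Sigma_\delta$ contributes negligibly as $\delta\to 0$. The particular geometry of $\tilde\Sigma$ is dictated precisely by these constraints.
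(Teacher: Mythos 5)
Your overall strategy (rescale, identify the pointwise limit, justify it by dominated convergence, with special care near $s=0$) is the same as the paper's, and you correctly spot the one genuinely delicate point, namely that $\tilde\Sigma$ is exactly the circle on which $\Re(1/s)$ is constant. However, the contour-indentation step contains a genuine error. With your substitution $u=4Ns$, the $s$-integrand has a pole of order $N$ at $s=0$ (from $u^{-N}$) and, for non-integer $\nu$, a branch point at $s=\tfrac{1}{4N}$ (from $(1-u)^{N+\nu}$), and $s=0$ sits exactly on the limiting circle $\tilde\Sigma$. An \emph{inward} semicircular bump places the contour to the left of the origin, so the origin is \emph{excluded} from the enclosed region: $\tilde\Sigma_\delta$ is then not homotopic to $\tfrac{1}{4N}\Sigma$ in the complement of the singularities, and the $s$-integral over $\tilde\Sigma_\delta$ is simply not equal to the kernel (for each fixed $t$ it would vanish by Cauchy's theorem). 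Flipping the bump outward restores the enclosure of $s=0$, but then for $N$ large the fixed-radius bump swallows the branch point $\tfrac{1}{4N}$ and crosses the cut $[\tfrac{1}{4N},+\infty)$ of $(1-4Ns)^{\nu}$, which is only harmless when $\nu$ is an integer. Relatedly, your final claim that the inward bump "contributes negligibly as $\delta\to0$" is false as stated: on the inward semicircle one has $\Re(1/s)\le 0$, so $|e^{-1/(4s)}|$ is as large as $e^{1/(4\delta)}$ there; the bump contribution is small only for an outward bump, where $\Re(1/s)\ge0$.

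The paper sidesteps all of this with the shifted substitution $u=4Ns+\tfrac12$, $v=4Nt+\tfrac12$. This moves the order-$N$ pole to $s=-\tfrac{1}{8N}$, which lies strictly \emph{inside} the fixed circle $\tilde\Sigma$ for every $N$, and the branch point to $s=+\tfrac{1}{8N}$, which lies strictly \emph{outside} (the circle is tangent to the imaginary axis at $0$ from the left). Hence no indentation is needed at all, and the prelimit factor becomes the symmetric expression $\bigl(\frac{1-\frac{1}{8Ns}}{1+\frac{1}{8Ns}}\bigr)^{N}$, whose modulus is bounded uniformly in $N$ and in $s\in\tilde\Sigma$ precisely because $\Re s=\frac{3}{2\tilde c}|s|^2$ on the circle, giving $\bigl|\frac{1-8Ns}{1+8Ns}\bigr|^{2}\le\bigl(1+\frac{3}{4|\tilde c|N}\bigr)^{2}$. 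Dominated convergence then applies directly on the fixed contours. To repair your version you would either need to reproduce this shift (or an equivalent $N$-dependent deformation near the origin), and you would also need to justify, e.g.\ by comparing with slightly enlarged circles, that the integral of the limit integrand over the indented contour agrees with the integral over $\tilde\Sigma$ itself, since the essential singularity of $e^{-1/(4s)}$ lies on $\tilde\Sigma$ and Cauchy's theorem alone does not give this.
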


\begin{remark}
For $r=0$ and $\nu\geq 0$, the limit kernel is equal to
\begin{equation}\label{eq:Besselkernel}
\mathbb K^{\rm pBessel}(x,x';0):=\int_{\tilde\Sigma}\frac{\dv s}{2\pi i}\int_{c+i\mathbb R}\frac{\dv t}{2\pi i}\frac{(-s)^\nu}{(-t)^\nu}\frac{\exp(\frac{1}{4t}-xt-\frac{1}{4s}+x's)}{t-s}.
\end{equation}
This is actually a known double contour integral expression for the hard edge Bessel kernel; see \cite{Girotti} and Appendix~\ref{appendix}.
\end{remark}

\begin{remark}
Applying the previous theorem in the case $W(z)=e^{\frac{z^2}{2}}$, we obtain that the kernel $K_N^{\rm LUE+GUE}$ from ~\eqref{kernelGUELUE} has a hard edge limit
\begin{equation}
\lim_{N\to\infty}\frac{1}{N}K_N^{\rm LUE+GUE}\left(\frac{x}{4N},\frac{x'}{4N};\frac{r}{4N}\right)=\int_{\tilde\Sigma}\frac{\dv s}{2\pi i}\int_{c+i\mathbb R}\frac{\dv t}{2\pi i}\frac{(-s)^\nu}{(-t)^\nu}\frac{\exp(\frac{r^2t^2}{2}+\frac{1}{4t}-xt+\frac{r^2s^2}{2}-\frac{1}{4s}+x's)}{t-s}.
\end{equation}
This is an alternative (and simpler) expression of a limit kernel computed in a different way in ~\cite[Eqn. (2.9)]{CD16}. The same limit kernel appeared in \cite[Formula (1.9)]{Veto} in the context of a first passage percolation problem.
\end{remark}

\begin{remark}
The above result does not require the biorthogonal ensemble of derivative type associated with the function $W$ to be non-negative. However, the probabilistic interpretation of the limit kernel as kernel of a (non-negative) determinantal point process is then no longer valid.
\end{remark}

\begin{figure}[t]
\begin{center}
		\begin{tikzpicture}		
			\node at (0,0) {};
			\fill (0,0) circle (0.05cm);
			\node at (0.15,-0.2) {$0$};
			 \draw[dashed,->-=1,black] (0,-2) to [out=90, in=-90] (0,2.2);
            \draw[gray] (-3,-2) to [out=90, in=-90] (-3,2.2);
            \draw[gray] (2,-2) to [out=90, in=-90] (2,2.2);
            \draw[black,->-=0.6] (-2.4,-2) to [out=90, in=-90] (-2.4,2.2);
			\draw[dashed,->-=1,black] (-4.4,0) to [out=0, in=-180] (4.4,0);
			\node at (-1.6,0.9) {\small $\tilde\Sigma$};
		\node at (-2.1,-2.1) {\small $c+i\mathbb R$};	
        		\node at (-3.5,-2.1) {\small $c_-+i\mathbb R$};	
                		\node at (2.2,-2.1) {\small $c_++i\mathbb R$};	
		\draw[->-=0.4,black] ([shift=(0.180:1.0cm)]-1.3,-0.7) arc (-50.180:360:.88cm);
\end{tikzpicture}
    \caption{The vertical strip $c_-<\Re z<c_+$ and the integration contours $\Sigma$ and $c+i\mathbb R$ in \eqref{KLUE+0}.}
    \label{figure: contourspBessel}
\end{center}
\end{figure}
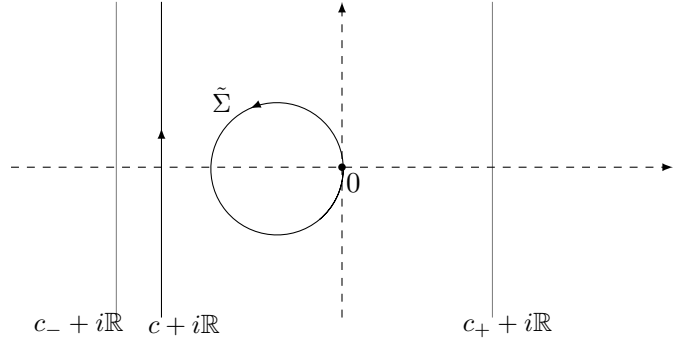

\subsection{Limit kernels for Muttalib-Borodin deformations}
\label{section:resultsMB}
Next, we consider the following ensemble for $\theta>0$ and $\eta>-\theta$:
\begin{equation}\label{MB0add}
     \frac{1}{Z_N(\theta)}\det[e^{(\theta j+\eta) x_k}]_{j,k=1}^N\det[(-\partial_{x_j})^{k-1}w(x_j)]_{j,k=1}^N \mathrm d x_1\ldots \mathrm d x_N,\quad x_1,\ldots,x_N\in\mathbb R,
\end{equation}
or equivalently in the variables $y_j=e^{x_j}$,
\begin{equation}\label{MB0mult}
     \frac{1}{Z_N(\theta)}\det[y_k^{\theta j +\eta -1}]_{j,k=1}^N\det[(-y_j\partial_{y_j})^{k-1}\tilde w(y_j)]_{j,k=1}^N \mathrm d y_1\ldots \mathrm d y_N,\quad y_1,\ldots,y_N> 0.
\end{equation}
Notice that \eqref{MB0add} is equal to the biorthogonal ensemble of derivative type \eqref{BEDT} with $a_j=\theta j +\eta$.
For $\theta=0$ and $\eta=0$, we recover the polynomial ensemble of additive derivative type \eqref{PE}, while for $\theta=1$ and $\eta=0$, \eqref{MB0mult} reduces to the polynomial ensemble of multiplicative derivative type \eqref{PEmult}.

In order to apply Theorem \ref{thm:algebraic}, we need that $w\in\mathcal U_{c_-,c_+}^N$ with $c_+>N\theta+\eta$.
Since we are interested in the large $N$ asymptotic behavior, we now assume that there exists $c_-<0$ such that $w$ belongs to the space $\mathcal U_{c_-,c_+}^N$ for every $N$ and for every $c_+>0$. We can construct non-negative biorthogonal ensembles of this type by taking $W$ of the form \eqref{def:Polyafreq} with $\tau>0$ or with $\tau=0$ and an infinite number of non-zero $b_j$-values, and moreover such that all non-zero $b_j$-values are negative.
Then, the kernel of the ensemble \eqref{MB0mult} is given by
\begin{equation}\label{kernelMBPolya}	K_N^{\rm MB}(y,y';\theta,\eta):=\oint_{\Sigma_N}\frac{\dv u}{2\pi i}\int_{c+i\mathbb R}\frac{\dv v}{2\pi i}\frac{W\left(v\right)\prod_{j=1}^N(v-\theta j-\eta )}{W\left(u\right)\prod_{j=1}^N(u-\theta j-\eta )}\frac{y^{-v-1}(y')^{u}}{v-u},
    \end{equation}
where $\Sigma_N$ is a simple positively oriented closed contour enclosing $\theta+\eta,2\theta+\eta,\ldots, N\theta+\eta$, and  $c+i\mathbb R$ with $c_-<c<\theta+\eta$ is such that it does not intersect with $\Sigma_N$.

\begin{example}\label{example:MBLaguerre}
For $\eta=1-\theta$ and $\tilde w(y)=y^{\nu} e^{-y}$, \eqref{MB0mult} is equal to
\begin{equation}\label{MBLaguerre}
     \frac{1}{Z_N(\theta)}\prod_{j<k}(y_j-y_k)(y_j^\theta-y_k^\theta)\prod_{j=1}^N
     y_j^{\nu}e^{-y_j}\mathrm d y_j.
\end{equation}
This is the Laguerre Muttalib-Borodin ensemble, see \cite{Borodin}.
\end{example}
\begin{example}\label{example:GUEequi}
For $w(x)=e^{-x^2/2}$, \eqref{MB0add} is the eigenvalue distribution of the GUE with equi-spaced external source. More general unitary invariant random matrix ensembles with equi-spaced external source have been studied in \cite{CW1,CW2}.
\end{example}

We can interpret the model \eqref{MB0mult} as a Muttalib-Borodin type deformation of a polynomial ensemble of multiplicative derivative type \eqref{PEmult}, which moreover interpolates between additive ($\theta=0$, in logarithmic variables $x_j=\log y_j$) and multiplicative ($\theta=1$) derivative type 
polynomial ensembles as $\theta$ increases from $0$ to $1$. 
It is well-known that hard edge scaling limits of Muttalib-Borodin ensembles give rise to limit kernels generalizing the hard edge Bessel kernel, called Wright's generalized Bessel kernels \cite{Borodin, Molag, Wang}. In our next result, we obtain a new class of limit kernels generalizing these Wright's generalized Bessel kernels. 


\begin{theorem}\label{thm:MB}
Let $c_-<0$ and let $W$ belong to $\mathcal V_{c_-,c_+}^N$ for all $c_+>c_-$ and for all natural numbers $N$. Assume that there exist $C,\tilde C>0$ and $c>\frac{\pi}{2\theta}$, and a region $A$ of the form \[A=\{z\in\mathbb C:\Re z\geq \theta+\eta-\epsilon, -\epsilon\leq \Im z\leq \epsilon\},\qquad \epsilon>0,\] such that 
\begin{align}
&\label{eq:MBcondW1}|W(\eta+it)|=\mathcal O\left(e^{-c|t|}\right),&\mbox{as $t\to \pm\infty$,}\\
&\label{eq:MBcondW2}\frac{1}{|W(u)|}\leq \tilde C e^{C\Re u}, &\mbox{for $u\in A$.}
\end{align}
The kernel $K_N^{\rm MB}(y,y';\theta,\eta)$ of the ensemble \eqref{MB0mult} with $\theta>0$ and $\eta>-\theta$ admits the  scaling limit
    \begin{equation}       
\lim_{N\to\infty}\frac{1}{N^{1/\theta}}K_N^{\rm MB }\left(\frac{y}{N^{1/\theta}},\frac{y'}{N^{1/\theta}};\theta,\eta\right)=\mathbb K^{\theta,\eta}(y,y'),
    \end{equation}
    where
    \begin{equation}\label{def:KMB}
        \mathbb K^{\theta,\eta}(y,y'):=\int_{\Sigma}\frac{\dv u}{2\pi i}\int_{\eta+i\mathbb R}\frac{\dv v}{2\pi i}\frac{W(v)\Gamma\left(1-\frac{u-\eta}{\theta}\right)}{W(u)\Gamma\left(1-\frac{v-\eta}{\theta}\right)}\frac{y^{-v-1}(y')^{u}}{v-u},
    \end{equation}
    with $\Sigma\subset A$ starting and ending at $+\infty$, going around $[\theta+\eta,+\infty)$ and not intersecting with the vertical line $\eta+i\R$. See Figure \ref{figure: contoursMB}.
\end{theorem}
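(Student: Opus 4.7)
The plan is to start from the explicit double contour integral formula \eqref{kernelMBPolya} for $K_N^{\rm MB}$, rewrite each product $\prod_{j=1}^N(\cdot-\theta j-\eta)$ in terms of Gamma functions, absorb the rescaling $y\mapsto y/N^{1/\theta}$, $y'\mapsto y'/N^{1/\theta}$ into an extra multiplicative factor $N^{(v-u)/\theta}$ inside the integrand, and then apply Stirling's formula together with dominated convergence using the hypotheses \eqref{eq:MBcondW1}--\eqref{eq:MBcondW2}.

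First, one applies the identity
\[
\prod_{j=1}^N(z-\theta j-\eta)=(-\theta)^N\,\frac{\Gamma\bigl(N+1-(z-\eta)/\theta\bigr)}{\Gamma\bigl(1-(z-\eta)/\theta\bigr)},
\]
so that the rational prefactor in \eqref{kernelMBPolya} becomes
\[
\frac{\prod_{j=1}^N(v-\theta j-\eta)}{\prod_{j=1}^N(u-\theta j-\eta)}=\frac{\Gamma\bigl(N+1-(v-\eta)/\theta\bigr)\,\Gamma\bigl(1-(u-\eta)/\theta\bigr)}{\Gamma\bigl(N+1-(u-\eta)/\theta\bigr)\,\Gamma\bigl(1-(v-\eta)/\theta\bigr)}.
\]
After the rescaling, the factor $y^{-v-1}(y')^u$ acquires $N^{(v+1)/\theta-u/\theta}$, which combined with the overall prefactor $N^{-1/\theta}$ contributes $N^{(v-u)/\theta}$, leaving the rest of the integrand untouched. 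By Stirling's formula, for $u,v$ fixed in bounded regions,
\[
\lim_{N\to\infty}\frac{\Gamma\bigl(N+1-(v-\eta)/\theta\bigr)}{\Gamma\bigl(N+1-(u-\eta)/\theta\bigr)}\,N^{(v-u)/\theta}=1,
\]
so that the pointwise limit of the rescaled integrand matches the integrand of $\mathbb K^{\theta,\eta}(y,y')$ in \eqref{def:KMB}.

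Next I would justify the contour deformations. The Gamma ratio $\Gamma(1-(u-\eta)/\theta)/\Gamma(N+1-(u-\eta)/\theta)$ has simple poles in $u$ only at $\theta+\eta,\ldots,N\theta+\eta$, because the poles of $\Gamma(1-(u-\eta)/\theta)$ at $u=\theta j+\eta$ for $j>N$ are exactly cancelled by poles of $\Gamma(N+1-(u-\eta)/\theta)$. Combined with hypothesis \eqref{eq:MBcondW2} (which ensures $W(u)\neq 0$ throughout $A$) and the super-exponential decay of $\Gamma(1-(u-\eta)/\theta)$ as $\Re u\to+\infty$ coming from the reflection formula, this permits opening $\Sigma_N$ into the infinite contour $\Sigma\subset A$ surrounding $[\theta+\eta,+\infty)$. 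The vertical contour $c+i\mathbb R$ can simultaneously be slid to $\eta+i\mathbb R$ by analyticity of $W$ in $\{\Re z>c_-\}$, with $\Sigma$ chosen to lie to the right of the vertical line.

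The key remaining step is to produce an $N$-uniform integrable majorant on the resulting (fixed) contours. On the vertical line, write $v=\eta+it$. Stirling's formula in the tail regime $|t|\gtrsim N$ gives $|\Gamma(N+1-it/\theta)|\lesssim|t/\theta|^{N+1/2}e^{-\pi|t|/(2\theta)}$, while in the bulk regime $|t|\lesssim N$ it gives $|\Gamma(N+1-it/\theta)|\lesssim N!$ with Gaussian corrections. Combining with $|\Gamma(N+1-(u-\eta)/\theta)|\gtrsim N!\,N^{-\Re(u-\eta)/\theta}$ and with the extra $N^{(v-u)/\theta}$, a matching between the two regimes yields a bound of the form
\[
\left|\frac{\Gamma\bigl(N+1-(v-\eta)/\theta\bigr)}{\Gamma\bigl(N+1-(u-\eta)/\theta\bigr)}\,N^{(v-u)/\theta}\right|\leq C(u)\,(1+|t|)^{O(1)}\,e^{\pi|t|/(2\theta)},
\]
uniform in $N$. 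The sharp threshold $c>\pi/(2\theta)$ in hypothesis \eqref{eq:MBcondW1} then ensures that $|W(\eta+it)|=O(e^{-c|t|})$ kills the exponential $e^{\pi|t|/(2\theta)}$, producing an integrable $N$-independent majorant. Along $\Sigma$ the analogous and easier bound uses hypothesis \eqref{eq:MBcondW2} together with the super-exponential decay of $\Gamma(1-(u-\eta)/\theta)/\Gamma(N+1-(u-\eta)/\theta)$ and of $(y')^u$ for large $\Re u$. Establishing this $N$-uniform vertical-line estimate, and in particular handling the crossover region $|t|\sim N$ between the Stirling regimes, is the main technical obstacle; the assumption $c>\pi/(2\theta)$ is precisely calibrated to make the estimate go through.
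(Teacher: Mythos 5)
Your setup (rewriting the products as ratios of Gamma functions, absorbing the rescaling into the factor $N^{(v-u)/\theta}$, and identifying the pointwise limit via Stirling) coincides with the paper's first step, but the heart of the proof --- an $N$-uniform integrable majorant justifying the exchange of limit and integration --- is not established in your proposal, and you yourself flag it as the unresolved ``main technical obstacle''. Moreover, the route you sketch toward it has two concrete problems. First, on the vertical line $v=\eta+it$ the shift $(v-\eta)/\theta$ is \emph{purely imaginary}, so the elementary bounds $|\Gamma(N+1-it/\theta)|\leq\Gamma(N+1)$ and $|N^{it/\theta}|=1$ give $\bigl|\Gamma(N+1-\tfrac{v-\eta}{\theta})N^{(v-\eta)/\theta}/\Gamma(N+1)\bigr|\leq 1$ uniformly in $N$ and $t$; no bulk/tail Stirling matching or crossover analysis is needed, and your claimed bound growing like $e^{\pi|t|/(2\theta)}$ misattributes the role of the hypothesis $c>\pi/(2\theta)$ --- that exponential growth comes from the $N$-independent factor $1/\Gamma(1-\tfrac{v-\eta}{\theta})$ (via the reflection formula), not from the $N$-dependent Gamma ratio. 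Second, the deformation of the closed contour $\Sigma_N$ into the unbounded $\Sigma$ at finite $N$ cannot be justified by ``super-exponential decay of $\Gamma(1-\tfrac{u-\eta}{\theta})$'': for finite $N$ the relevant ratio $\Gamma(1-\tfrac{u-\eta}{\theta})/\Gamma(N+1-\tfrac{u-\eta}{\theta})=\prod_{j=1}^N(j-\tfrac{u-\eta}{\theta})^{-1}$ decays only polynomially, and the subsequent uniform-in-$N$ bound on the unbounded $u$-contour (where $\Re u$ can be comparable to or larger than $N\theta$) is genuinely delicate, not ``easier''.

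The paper avoids all of this by evaluating the $u$-integral exactly by residues at $u_k=k\theta+\eta$, $k=1,\dots,N$ (with residue $(-1)^{k-1}\theta/(k-1)!$ for the Gamma factor), turning the double integral into a finite sum of single $v$-integrals. Each $v$-integral is then dominated using the trivial bound above together with the integrability of $W(v)/\Gamma(1-\tfrac{v-\eta}{\theta})$ guaranteed by \eqref{eq:MBcondW1}, and the $k$-sum is dominated via \eqref{eq:MBcondW2} by a binomial sum of the form $\sum_{k=1}^N\binom{N}{k}\bigl(e^{C\theta}(y')^{\theta}/N\bigr)^k\leq(1+e^{C\theta}(y')^\theta/N)^N-1$, which is bounded uniformly in $N$. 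Only after passing to the limit term by term is the $u$-integral over the unbounded $\Sigma$ reconstructed. If you want to complete your argument you should either adopt this residue decomposition or actually prove the uniform majorant you postulate; as written, the proof is incomplete at its decisive step.
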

\begin{remark}\label{remark:convergence}
Observe that the condition \eqref{eq:MBcondW1} implies in particular, by Stirling's approximation, that 
\[\frac{W(\eta+it)}{\Gamma\left(1-\frac{it}{\theta}\right)}=\mathcal O\left(|t|^{-1/2}e^{\frac{\pi}{2\theta}|t|}e^{-c|t|}\right)\qquad \mbox{as $t\to\pm\infty$,}\]
such that
$y^{-v-1}\frac{W(v)}{\Gamma\left(1-\frac{v-\eta}{\theta}\right)}$
is integrable on $\eta+i\mathbb R$.
Also, by the reflection formula and Stirling's approximation, we have
\[\frac{\Gamma\left(1-\frac{u-\eta}{\theta}\right)}{W(u)}=\frac{\pi}{W(u)\sin\left(\pi\frac{u-\eta}{\theta}\right)\Gamma\left(\frac{u-\eta}{\theta}\right)}=\mathcal O\left(\frac{e^{-\frac{u-\eta}{2\theta}\log\frac{u-\eta}{\theta}}}{W(u)|\Im u|}\right)\qquad \mbox{as $\Re u\to +\infty$},\]
such that \eqref{eq:MBcondW2} implies that $(y')^{u}\frac{\Gamma\left(1-\frac{u-\eta}{\theta}\right)}{W(u)}$
is integrable on $\Sigma$ for every $y'>0$, provided that we take $\Sigma$ in a suitable way, inside $A$ and not too close to the real line. Since $\Sigma$ and $\eta+i\mathbb R$ do not intersect, it then follows that the double integral in \eqref{def:KMB} is absolutely convergent.
\end{remark}

\begin{remark}
If $w\in\mathcal U_{c_-,c_+}^N$ for all $c_+>c_-$ and for all natural numbers $N$, and moreover it is analytic with sufficiently rapid decay in a horizontal strip $|\Im z|<c'$ with $c'>2\pi$, then condition \eqref{eq:MBcondW1} is satisfied. Indeed, in that case we can deform the $x$-integral in \eqref{def:Ww} from $\mathbb R$ to $i\sgn(y)c+\mathbb R$ with $\pi/2<c<c'$, and obtain 
\[|W(\eta +iy)|\leq \int_{i\sgn(y)c+\mathbb R}|e^{(\eta+iy)s}w(s)|\dv s\leq e^{-c|y|}\int_{\mathbb R}e^{\eta x}|w(x+i\sgn(y)c)|\dv x.\]
On the other hand, if $w(x)\geq 0$ for all $x\in\mathbb R$ and $w$ is not identically zero, then again by \eqref{def:Ww}, we have for every $a<b\in\mathbb R$ that 
\[|W(u)|\geq |\Re W(u)|=\left|\int_{-\infty}^{+\infty}e^{x\Re u}\cos(x\Im u)w(x)\dv x\right|
\geq e^{a\Re u}\int_a^{b}\cos(x\Im u)w(x)\dv x.\] It now suffices to take $a,b,A$ such that
$\cos(x\Im u)\geq 1/2$ for $x\in[a,b], u\in A$ and such that
$\int_a^{b}w(x)\dv x>0$ and we obtain \eqref{eq:MBcondW2}. 
\end{remark}
\begin{remark}
We point out that $W$ has to be independent of $N$. This excludes, for instance, the functions $W$ underlying the JUE and the C-LUE~\eqref{ensembles}. Nevertheless, we do recover some well-known random matrix ensembles and limit kernels.
\begin{itemize}
    \item For $\theta=1$, $\eta=0$ and $W(z)={\prod_{j=1}^M\Gamma(z+\nu_j)}$,~\eqref{MB0mult} is the singular value distribution of products of Ginibre matrices, and the kernel~\eqref{def:KMB} is equivalent to the Meijer-G kernel; see~\cite[Eqn. (5.7)]{KuijlaarsZhang} and~\cite[Eqn. (1.11)--(1.12) and Appendix A]{CGS19}.\smallskip
    \item For $\eta=0$, $\theta>1$ and $W(z)={\Gamma(z+\nu)}$, ~\eqref{MB0mult} reduces to the Laguerre Muttalib-Borodin ensembles, and the kernel~\eqref{def:KMB} is equal to Wright's generalized Bessel kernel, see \cite[Eqn. (1.11)--(1.14) and Appendix A]{CGS19}. Note that the condition $\theta>1$ is needed for \eqref{eq:MBcondW1} to hold.\smallskip
    \item It can be checked that for any P\'olya frequency density \eqref{def:Polyafreq} with~\eqref{eq:Polyaparameters} satisfying $\tau>0$ and all $b_j<0$, the corresponding function $W$ satisfies the assumptions of Theorem \ref{thm:MB}. Thus every such $W$ gives rise to a different hard edge kernel in the form of~\eqref{def:KMB}.
\end{itemize}
\end{remark}

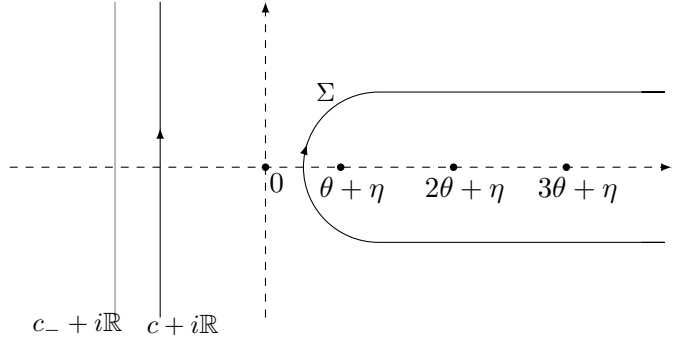
\begin{figure}[t]
\begin{center}
		\begin{tikzpicture}		
			\node at (0,0) {};
			\fill (-1,0) circle (0.05cm);
			\node at (-0.85,-0.2) {$0$};
            \fill (0,0) circle (0.05cm);
			\node at (0.15,-0.3) {$\theta+\eta$};
            \fill (1.5,0) circle (0.05cm);
            \node at (1.65,-0.3) {$2\theta+\eta$};
            \fill (3,0) circle (0.05cm);
            \node at (3.15,-0.3) {$3\theta+\eta$};
			 \draw[dashed,->-=1,black] (-1,-2) to [out=90, in=-90] (-1,2.2);
            \draw[gray] (-3,-2) to [out=90, in=-90] (-3,2.2);
                        \draw[black,->-=0.6] (-2.4,-2) to [out=90, in=-90] (-2.4,2.2);
			\draw[dashed,->-=1,black] (-4.4,0) to [out=0, in=-180] (4.4,0);
			\node at (-0.2,1) {\small $\Sigma$};
		\node at (-2.1,-2.1) {\small $c+i\mathbb R$};	
        		\node at (-3.5,-2.1) {\small $c_-+i\mathbb R$};	
                \draw[-<-=2,black] (0.5,1) to [out=0, in=0] (4,1);
                \draw[black] (0.5,-1) to [out=0, in=0] (4,-1);
                \draw[-<-=0.4,black] (0.5,1) arc (90.270:270:1cm);
\end{tikzpicture}
    \caption{The half-plane $\Re z>c_-$ and the integration contours $\Sigma$ and $c+i\mathbb R$ in \eqref{def:KMB}.}
    \label{figure: contoursMB}
\end{center}
\end{figure}

\subsection{Outline}

In Section \ref{section:alg}, we will study the biorthogonal ensembles of derivative type in detail. We will first study the relevant function spaces $\mathcal U_{c_-,c_+}^N$ and $\mathcal V_{c_-,c_+}^N$ for the functions $w$ and $W$. Secondly, we will derive the identity \eqref{ZN0} for the partition function $Z_N(a_1,\ldots, a_N)$.
Afterwards, we will construct  biorthogonal systems for biorthogonal ensembles of derivative type with distinct $a_1,\ldots, a_N$. This will enable us to prove the double contour integral formula for the correlation kernel, first in the distinct case, and later also in the general case by taking confluent limits. All these results will lead to the proof of Theorem \ref{thm:algebraic}.
At the end of Section \ref{section:alg}, we will provide more details about additive convolutions of polynomial ensembles of derivative type and about P\'olya frequency densities.

In Section \ref{section:asymp}, we will analyze special cases of biorthogonal ensembles of derivative type asymptotically. First, we will consider ensembles arising from additive perturbations of LUE and prove Theorem \ref{theorem:LUE+}. Secondly, we will consider the Muttalib-Borodin type ensembles and prove Theorem \ref{thm:MB}.

\section{Properties of biorthogonal ensembles of derivative type}\label{section:alg}

\subsection{Function spaces for \texorpdfstring{$w$}{TEXT} and \texorpdfstring{$W$}{TEXT}}

We will now show that the first relation in \eqref{def:Ww} maps the function space $\mathcal U_{c_-,c_+}^N$ for $w$ into the function space $\mathcal V_{c_-,c_+}^N$. Afterwards, we will show that the second relation in \eqref{def:Ww} maps $W\in \mathcal V_{c_-,c_+}^N$ back into $\mathcal U_{c_-,c_+}^N$ under the condition that $W$ satisfies a mild additional $L^1$ bound.

\begin{proposition}\label{prop:wtoW}
Let $w\in \mathcal U_{c_-,c_+}^N$ with $c_-<0<c_+$. The function $$W(z):=\f[w](-iz)=\int_{-\infty}^{+\infty}e^{xz}w(x)\mathrm d x.$$ belongs to $\mathcal V_{c_-,c_+}^N$.
\end{proposition}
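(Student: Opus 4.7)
The plan is to verify the three defining conditions of $\mathcal V_{c_-,c_+}^N$ using, as the main tool, the integration-by-parts identity
\begin{equation*}
z^{N-1} W(z) = (-1)^{N-1}\int_{-\infty}^{+\infty} e^{xz}\, w^{(N-1)}(x)\,\dv x,\qquad z\in\mathcal S_{c_-,c_+},
\end{equation*}
which follows by $N-1$ integrations by parts from $\partial_x^{N-1}e^{xz}=z^{N-1}e^{xz}$, noting that $w^{(j)}\in L^1(\R)$ for $0\le j\le N-1$ by the Sobolev hypothesis, and that the boundary contributions $[e^{xz}w^{(j)}(x)]_{-\infty}^{+\infty}$ vanish throughout the strip by the two-sided exponential decay of each $w^{(j)}$.

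To obtain analyticity of $W$ on $\mathcal S_{c_-,c_+}$, I would first observe that the $j=0$ exponential bound gives $e^{xz}w(x)\in L^1(\R)$ with an integrable majorant that is uniform in $\Re z$ on compact subsets of $(c_-,c_+)$. Continuity of $W$ on any such sub-strip then follows from dominated convergence, and either differentiation under the integral sign or Morera's theorem produces the holomorphy.

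For the uniform $L^\infty$ and $L^2$ bounds, fix $\epsilon>0$ and $c\in[c_-+\epsilon,c_+-\epsilon]$, and pick $c_1\in(c_+-\epsilon,c_+)$ and $c_2\in(c_-,c_-+\epsilon)$. Invoking the $\mathcal U_{c_-,c_+}^N$ bound for $w^{(N-1)}$ with parameter $c_1$ on $[0,\infty)$ and with parameter $c_2$ on $(-\infty,0]$, and combining with $e^{cx}\le e^{(c_+-\epsilon)x}$ respectively $e^{cx}\le e^{(c_-+\epsilon)x}$, yields a dominating function for $x\mapsto e^{cx}|w^{(N-1)}(x)|$ that lies in $L^1(\R)\cap L^\infty(\R)\subset L^2(\R)$ and is independent of $c$ in the compact range (here one uses that $0\in(c_-,c_+)$ to conclude that $w^{(N-1)}$ itself is bounded). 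The $L^\infty(c+i\R)$ bound on $z^{N-1}W(z)$ then follows immediately from the identity above by taking absolute values under the integral, while the $L^2(c+i\R)$ bound follows from Plancherel's identity
\begin{equation*}
\|z^{N-1}W(z)\|_{L^2(c+i\R)}^2 = 2\pi\int_{-\infty}^{+\infty} e^{2cx}|w^{(N-1)}(x)|^2\,\dv x,
\end{equation*}
whose right-hand side is again uniformly controlled by the same majorant.

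The pointwise decay $\lim_{t\to\pm\infty}(c+it)^{N-1}W(c+it)=0$ for fixed $c\in(c_-,c_+)$ is then an immediate application of the Riemann--Lebesgue lemma to $e^{cx}w^{(N-1)}(x)\in L^1(\R)$, via the same integration-by-parts identity. No individual step is difficult; the only thing requiring care is the two-sided splitting in the previous paragraph, since the decay of $w^{(N-1)}$ at $+\infty$ and at $-\infty$ must be invoked with different rates $c_1$ and $c_2$ in order to produce a majorant that is integrable at both ends and simultaneously uniform in $c$ over the compact sub-strip.
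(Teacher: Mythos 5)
Your proposal is correct and follows essentially the same route as the paper: integration by parts to write $z^{N-1}W(z)$ as the Fourier transform of $e^{cx}w^{(N-1)}(x)$, a two-sided exponential majorant uniform over the compact sub-strip for the $L^\infty$ and $L^2$ bounds (the paper invokes $L^2$-boundedness of the Fourier transform, which is your Plancherel step), and the Riemann--Lebesgue lemma for the pointwise decay. No gaps.
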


\begin{proof}
Let $w\in\mathcal U_{c_-,c_+}^N$ for some $c_-<0<c_+$. The regularity and exponential decay of $w$ imply that $W$ is well-defined for every $z\in\mathbb C$ such that $c_-<\Re z<c_+$: indeed for sufficiently small $\epsilon>0$ and for $x\in\mathbb R$, there exists $\tilde c>0$ such that we have 
\begin{equation*}|e^{xz}w(x)|= e^{x\Re z}|w(x)|\leq \tilde c e^{-(c_+-\epsilon-\Re z)x}\ind_{[0,+\infty)]}(x)+\tilde c e^{-(c_-+\epsilon-\Re z)x}\ind_{(-\infty,0)}(x),\end{equation*}
which is integrable if $c_-+\epsilon\leq \Re z\leq c_+-\epsilon$. One readily checks using standard complex analysis arguments that $W$ is analytic in the domain $c_-<\Re z<c_+$.

Next we show integrability the decay of $W$. Using integration by parts, we obtain 
\begin{equation*}zW(z)=
-\int_{-\infty}^{+\infty}e^{xz}w'(x)\mathrm d x
,
\end{equation*}
where we notice that the boundary term vanishes because of the exponential decay of $w$. Repeating this and using the integrability and exponential decay of derivatives of $w$, we obtain for $j=0,\ldots, N-1$ that
\begin{equation}\label{def:moment}z^{j}W(z)=
(-1)^j\int_{-\infty}^{+\infty}e^{xz}w^{(j)}(x)\mathrm d x.
\end{equation}
From this expression for $j=N-1$, we immediately see that 
\begin{equation*}|z^{N-1}W(z)|\leq \int_{-\infty}^{+\infty}e^{x\Re z}|w^{(N-1)}(x)|\mathrm dx,\end{equation*}
which is uniformly bounded for $c_-+\epsilon<\Re z<c_+-\epsilon$ for any $\epsilon>0$, since there exists $\tilde c>0$ such that
\begin{equation}\label{estimatesW}e^{x\Re z}|w^{(N-1)}(x)|\leq \tilde c e^{-(c_+-\epsilon/2-\Re z)x}\ind_{[0,+\infty)]}(x)+\tilde ce^{-(c_-+\epsilon/2-\Re z)x}\ind_{(-\infty,0)}(x)\leq \tilde c e^{-\epsilon|x|/2}.\end{equation}
We conclude that \begin{equation*}\sup_{c_-+\epsilon<c<c_+-\epsilon}\|z^{N-1}W(z)\|_{L^\infty(c+i\mathbb R)}<\infty,\end{equation*}
for any $\epsilon>0$.

Next, we observe that \eqref{def:moment} expresses $z^{N-1}W(z)$ as a Fourier transform:
\begin{equation}\label{eq:WFourier}(c+it)^{N-1}W(c+it)=(-1)^{N-1}\int_{-\infty}^{+\infty}e^{ixt}e^{cx}|w^{(N-1)}(x)|\mathrm dx=(-1)^{N-1}\mathcal F[e^{c}\cdot w^{(N-1)}].\end{equation}
We know from \eqref{estimatesW} that $e^{cx}w^{(N-1)}(x)$ is integrable for every $c_-<c<c_+$, such that the Riemann-Lebesgue lemma implies that \begin{equation*}\lim_{t\to\pm\infty}(c+it)^{N-1}W(c+it)=0,\qquad \mbox{for every $c_-<c<c_+$.}\end{equation*}
By the same estimate \eqref{estimatesW}, $e^{c}\cdot w^{(j)}$ is square-integrable
with \begin{equation*}\sup_{c_-+\epsilon<c<c_+-\epsilon}\|e^{c}\cdot w^{(j)}\|_{L^2(\mathbb R)}<\infty.\end{equation*}
Since the Fourier transform is a bounded operator on $L^2(\mathbb R)$, this also implies that the function $t\in\mathbb R\mapsto (c+it)^{N-1}W(c+it)$ is also in $L^2(\mathbb R)$, so $z\in c+i\mathbb R\mapsto z^{N-1}W(z)$ is in $L^2(c+i\mathbb R)$, with 
\begin{equation*}\sup_{c_-+\epsilon<c<c_+-\epsilon}\|z^{N-1}W(z)\|_{L^2(c+i\mathbb R)}<\infty.\end{equation*}
We have now proved that $W\in \mathcal V_{c_-,c_+}^N$.
\end{proof}

One observes that $w\in\mathcal U_{c_-,c_+}^N$ does not imply the $L^1$-bound
\begin{equation*}\sup_{c_-+\epsilon<c<c_+-\epsilon}\|z^{N-1}W(z)\|_{L^1(c+i\mathbb R)}<\infty.\end{equation*}
Indeed, for the LUE with external source with $\nu=0$, we have $W(z)\sim \frac{1}{z^{N}}$ as $z\to \pm i\infty$, such that $z^{N-1}W(z)$ is not integrable on a vertical line $c+i\mathbb R$.
In general, it might happen that the second relation in \eqref{def:Ww} does not map $W\in\mathcal V_{c_-,c_+}^N$ back onto $w\in\mathcal U_{c_-,c_+}^N$. In order to ensure this, we need an additional $L^1$-condition.

\begin{proposition}\label{prop:Wtow}
    Let $W\in \mathcal V_{c_-,c_+}^N$ with $c_-<0<c_+$, and suppose in addition that
    \begin{equation}\label{eq:L1}\sup_{c_-+\epsilon<c<c_+-\epsilon}\|z^{N-1}W(z)\|_{L^1(c+i\mathbb R)}<\infty.\end{equation}
    The function $$w(x)=\frac{1}{2\pi i}\int_{c+i\mathbb R}e^{-xz}W(z)\mathrm d z$$ is independent of $c_-<c<c_+$ and belongs to $\mathcal U_{c_-,c_+}^N$.
\end{proposition}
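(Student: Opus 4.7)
My plan is to verify three things in sequence: (i) the integral defining $w$ converges absolutely for each $c\in(c_-,c_+)$, (ii) the resulting function is independent of $c$, and (iii) $w$ satisfies the regularity and exponential decay properties of $\mathcal{U}_{c_-,c_+}^N$. The $L^1$-membership of each derivative will come essentially for free from the exponential decay bounds, using $c_-<0<c_+$ to pick a rate of the right sign on each half-line. I expect the only delicate point to be controlling the horizontal pieces in the contour deformation for step~(ii) when $N=1$.

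For step~(i), the additional $L^1$ hypothesis \eqref{eq:L1}, together with local boundedness of $W$ on the strip (from analyticity), gives $W\in L^1(c+i\mathbb{R})$ for every $c\in(c_-,c_+)$: I split $\int|W|\,|dz|$ into $|t|\le 1$ (bounded by continuity) and $|t|\ge 1$ (bounded via $|W(c+it)|\le |c+it|^{-(N-1)}\cdot |(c+it)^{N-1}W(c+it)|$, which is integrable by assumption). Since $|e^{-xz}|=e^{-xc}$ is constant on the line, absolute convergence follows.

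For step~(ii), I would apply Cauchy's theorem on the rectangle with vertices $c_1\pm iT$ and $c_2\pm iT$, for $c_-<c_1<c_2<c_+$. The two vertical sides yield the two candidate values of $w$ as $T\to\infty$, so everything reduces to showing that the horizontal integrals vanish. When $N\ge 2$, the uniform $L^\infty$-bound in condition~(2) of $\mathcal{V}_{c_-,c_+}^N$ gives $|W(c+iT)|\le M/|T|^{N-1}$ uniformly in $c\in[c_1,c_2]$, and the horizontal integrals are bounded by $(c_2-c_1)\max_{[c_1,c_2]}e^{-xc}\cdot M/|T|^{N-1}\to 0$. For $N=1$, condition~(3) provides only pointwise decay, so I would upgrade it to uniform convergence on $[c_1,c_2]$ via a Montel/Vitali argument applied to the uniformly bounded analytic family $c\mapsto W(c+iT)$ parametrized by $T$.

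For step~(iii), the change of variable $z=c+it$ rewrites $e^{xc}w(x)$ as (up to normalization) the inverse Fourier transform of $t\mapsto W(c+it)$. Differentiating under the integral is justified for $0\le j\le N-1$ because $|z|^j|W(z)|$ is integrable on $c+i\mathbb{R}$ (the splitting of step~(i), together with $|z|^j\le 1+|z|^{N-1}$), yielding
\[
e^{cx}w^{(j)}(x)=\frac{(-1)^j}{2\pi}\int_{\mathbb{R}}(c+it)^j e^{-ixt}W(c+it)\,dt,
\]
which is bounded in $x$ by the $L^1$-norm of the integrand. Hence $|w^{(j)}(x)|\le \tilde c\, e^{-cx}$ for every $c\in(c_-,c_+)$ and every $0\le j\le N-1$, giving exactly the exponential decay required for membership in $\mathcal{U}_{c_-,c_+}^N$. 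Finally, $w^{(j)}\in L^1(\mathbb{R})$ follows from two-sided exponential decay by choosing $c>0$ on $x>0$ and $c<0$ on $x<0$, which is possible since $c_-<0<c_+$.
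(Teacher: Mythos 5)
Your proof is correct and follows essentially the same route as the paper: the same splitting of $c+i\mathbb R$ into $|t|\le 1$ and $|t|\ge 1$ for absolute convergence, the same rectangular contour deformation for independence of $c$, and the same Fourier representation $e^{cx}w^{(j)}(x)=\frac{(-1)^j}{2\pi}\int_{\mathbb R}(c+it)^je^{-ixt}W(c+it)\,\dv t$ yielding the bound $|w^{(j)}(x)|\le \tilde c\,e^{-cx}$ from \eqref{eq:L1}.

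Two minor deviations are worth flagging. First, for the horizontal segments in the case $N=1$ you invoke Montel/Vitali to upgrade the pointwise decay of condition (3) to uniform decay on $[c_1,c_2]$; this is valid but heavier than necessary — the paper simply applies dominated convergence on the finite segment, using the pointwise limit $W(c+iT)\to 0$ together with the uniform $L^\infty$ bound of condition (2) as the dominating constant, which works uniformly in $N\ge 1$. Second, your final step deduces $w^{(j)}\in L^1(\mathbb R)$ directly from the two-sided exponential bounds by taking $c$ near $c_+>0$ on the positive half-line and near $c_-<0$ on the negative half-line; this is cleaner than the paper's argument, which factorizes $w^{(j)}$ into a product of two $L^2$ functions and applies Cauchy--Schwarz (and thereby uses the $L^2$ part of condition (2), which your argument does not need at all). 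Both routes are sound.
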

\begin{proof}
The analyticity of $W$ in $c_-<\Re z<c_+$ together with the $L^1$ condition \eqref{eq:L1} implies that $e^{-xz}W(z)$ is integrable on $-c+i\mathbb R$ for every $c_-<c<c_+$: indeed for every such $c$ there is $M>0$ such that
\begin{equation*}
    |e^{-xz}W(z)|\leq M e^{-cx}\ind_{[-c-i,-c+i]}(z)+ e^{-cx}|z^{N-1}W(z)|,
\end{equation*}
and the right hand side is integrable on $c+i\mathbb R$.
In particular, $w(x)$ is well-defined for any $x\in\mathbb R$. 

Given $x\in\mathbb R$ and $c_-<c<c_+$, let us now integrate the function $e^{-xz}W(z)$ over the rectangle with corners $-iR, c-iR, c+iR, +iR$. By analyticity, the integral over this rectangle is zero. Moreover, as $R\to +\infty$, the contribution of the segments $[-iR,c-iR]$ and $[iR,c+iR]$ converges to $0$ by the dominated convergence theorem. Indeed,
$\lim_{t\to\pm\infty}W(c+it)=0$, and $W$ is uniformly bounded on $[\pm iR,c\pm iR]$ since
\begin{equation*}\sup_{c_-+\epsilon<c<c_+-\epsilon}\|z^{N-1}W(z)\|_{L^\infty(c+i\mathbb R)}<\infty.\end{equation*}
It follows that $w(x)=\frac{1}{2\pi i}\int_{-c+i\mathbb R}e^{xz}W(-z)d x$ is independent of $c_-<c<c_+$. 

It is also straightforward to check that we can bring derivatives inside the integral:
\begin{equation*}w^{(j)}(x)=\frac{1}{2\pi i}\int_{c+i\mathbb R}e^{-xz}(-z)^{j}W(z)\mathrm d z,\qquad j=0,\ldots, N-1,\end{equation*}
for every $c_-<c<c_+$.
Parametrizing $c+i\mathbb R$, we get
\begin{equation}\label{eq:wjFourier}w^{(j)}(x)=\frac{e^{-cx}}{2\pi}\int_{-\infty}^{+\infty}e^{-ixt}(-c-it)^{j}W(c+it)\mathrm d t,\qquad j=0,\ldots, N-1.\end{equation}
Using again the $L^1$-bound \eqref{eq:L1}, we see that the remaining integral on the right is bounded, which yields the exponential decay condition of $w^{(j)}$. Since 
\begin{equation*}
    \sup_{c_-+\epsilon<c<c_+-\epsilon}\|z^{N-1}W(z)\|_{L^2(c+i\mathbb R)}<\infty,
\end{equation*}
it also follows from the Fourier integral representation \eqref{eq:wjFourier} that $e^{cx}w^{(N-1)}(x)$ has uniformly bounded $L^2(\mathbb R)$-norm for $c_-+\epsilon<c<c_+-\epsilon$.
For $x\geq 0$, we write
\begin{equation*}w^{(j)}(x)= e^{-(c_+-\epsilon)x}. e^{(c_+-\epsilon)x}w^{(j)}(x),\end{equation*}
while for $x<0$, we write
\begin{equation*}w^{(j)}(x)= e^{-(c_-+\epsilon)x}. e^{(c_-+\epsilon)x}w^{(j)}(x).\end{equation*}
In either case, we factorized $w^{(j)}$ into a product of two square-integrable functions, such that  $w^{(j)}$ is integrable by the Cauchy-Schwarz inequality. This completes the proof that $w\in\mathcal U_{c_-,c_+}^N$.
\end{proof}

\subsection{Partition function}
Next, we will compute the partition function $Z_N(a_1,\ldots, a_N)$ defined by \eqref{def:ZN}.
\begin{proposition}\label{prop:partition}
Let $w\in \mathcal U_{c_-,c_+}^N$ and let $a_1,\ldots, a_N\in\mathcal S_{c_-,c_+}$.
Then, the partition function \eqref{def:ZN} is  given by
    \begin{equation}\label{ZN}
Z_N(a_1,\ldots,a_N)=N!\prod_{j=1}^NW(a_j),
    \end{equation}
    where $W$ is given by \eqref{def:Ww}.
If $\int_{\mathbb R}w(x)\mathrm d x>0$ and $w$
is a non-negative $C^{N-1}(\R)$ function satisfying
\begin{equation}\label{PFF}
    \frac{\det[(-\partial_{x_j})^{k-1}w(x_j)]_{j,k=1}^N}{\Delta(x)}\ge 0,
\end{equation}
then $W(x)>0$ for $c_-<x<c_+$.  Moreover, we have in this case that for any $c_-<a_1,\ldots, a_N<c_+$, \eqref{BEDT} with $Z_N(a_1,\ldots, a_N)$ given by \eqref{ZN} defines a probability measure on $\mathbb R^N$.  
\end{proposition}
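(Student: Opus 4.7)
The plan is to prove Part 1 by Andréief's identity, and Part 2 by combining non-negativity of $w$ with the HCIZ interpretation of the exponential determinant already recalled in the introduction.

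For Part 1, the first step is to pull the constant $1/\Delta(a)$ outside the integral and apply Andréief's (Cauchy--Binet) identity to obtain
\begin{equation*}
Z_N(a_1,\ldots,a_N)=\frac{N!}{\Delta(a)}\det\left[\int_{\R}e^{a_jx}(-\partial_x)^{k-1}w(x)\,dx\right]_{j,k=1}^N.
\end{equation*}
Each entry is then simplified by integrating by parts $k-1$ times. The boundary terms vanish because $w\in\mathcal U_{c_-,c_+}^N$ provides exponential bounds $|w^{(l)}(x)|\leq \tilde c e^{-c x}$ for any $c\in(c_-,c_+)$: choosing $c$ slightly to the right of $a_j$ controls $e^{a_jx}w^{(l)}(x)$ as $x\to+\infty$, and choosing $c$ slightly to the left of $a_j$ controls it as $x\to-\infty$. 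After the integrations by parts, each entry reduces to $a_j^{k-1}W(a_j)$. Factoring $W(a_j)$ out of row $j$ leaves the Vandermonde $\det[a_j^{k-1}]_{j,k=1}^N=\Delta(a)$, which cancels the prefactor and yields $Z_N=N!\prod_{j=1}^N W(a_j)$.

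For Part 2, positivity of $W$ on $(c_-,c_+)$ is immediate from the integral formula $W(x)=\int_\R e^{xs}w(s)\,ds$ together with $w\geq 0$ and $\int_\R w>0$. To verify that \eqref{BEDT} is a probability measure in the case of distinct $a_j\in(c_-,c_+)$, I would use the HCIZ positivity of $\det[e^{a_jx_k}]_{j,k=1}^N/(\Delta(a)\Delta(x))$ recalled in the introduction to rewrite the density as
\begin{equation*}
\frac{\Delta(x)^2}{Z_N(a_1,\ldots,a_N)}\cdot\frac{\det[e^{a_jx_k}]_{j,k=1}^N}{\Delta(a)\Delta(x)}\cdot\frac{\det[(-\partial_{x_j})^{k-1}w(x_j)]_{j,k=1}^N}{\Delta(x)},
\end{equation*}
in which the HCIZ quotient is positive, the derivative quotient is non-negative by the Pólya condition \eqref{PFF}, and $Z_N>0$ by Part 1 combined with the established positivity of $W$. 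That the density integrates to $1$ is automatic from Part 1. For non-distinct $a_j$'s, both the density (through the confluent limit of the HCIZ factor, which remains positive) and the formula $N!\prod W(a_j)$ are continuous in $(a_1,\ldots,a_N)$, so non-negativity and total mass $1$ pass to the limit by dominated convergence.

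The only genuinely delicate point is the vanishing of boundary terms in the integrations by parts, but this is fully controlled by the exponential decay conditions built into $\mathcal U_{c_-,c_+}^N$; the rest of the argument is algebraic manipulation combined with a straightforward continuity/confluence argument.
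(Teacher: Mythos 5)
Your proposal is correct and follows essentially the same route as the paper: Andréief's identity, integration by parts with boundary terms killed by the exponential decay built into $\mathcal U_{c_-,c_+}^N$, cancellation of the Vandermonde $\Delta(a)$, and then non-negativity via the Harish-Chandra--Itzykson--Zuber interpretation of $\det[e^{a_jx_k}]/(\Delta(a)\Delta(x))$ combined with the P\'olya condition \eqref{PFF}. Your explicit continuity/confluence remark for non-distinct $a_j$'s is a small additional touch, consistent with how the paper handles the confluent case elsewhere.
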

\begin{proof}
    If $a_1,\ldots, a_N$, we use \eqref{def:ZN} and apply Andreief's identity (see, e.g.~\cite{Forrester19}) to obtain
    \begin{equation}\label{2.6}	Z_N(a_1,\ldots,a_N)=N!\frac{\det\left[\int_\R e^{a_jx}(-\partial_{x})^{k-1}w(x)\dv x\right]_{j=1}^k}{\Delta(a)}.
    \end{equation}
    One can then integrate by parts, the boundary terms vanish, and one obtains
    \begin{equation}
        \int_\R e^{a_jx}(-\partial_{x})^{k-1}w(x)\dv x=a_j^{k-1}\int_\R e^{a_jx}w(x)\dv x=a_j^{k-1}W(a_j).
    \end{equation}
    Substituting this in~\eqref{2.6} we find
        \begin{equation}\label{2.6b}	Z_N(a_1,\ldots,a_N)=N!\frac{\Delta(a)\prod_{j=1}^NW(a_j)}{\Delta(a)}=N! \prod_{j=1}^NW(a_j).
    \end{equation}
By continuity of the right hand side in $a_1,\ldots, a_N$, $Z_N(a_1,\ldots, a_N)$
 is also defined if $a_1,\ldots, a_N$ are not distinct.

    Now, suppose that $w$ satisfies~\eqref{PFF} and that $a_1,\ldots, a_N$ are real-valued. Since $w\ge 0$ and $w$ is not identically $0$, we see that 
    \begin{equation}
W(a_j)=\int_\R e^{a_jx}w(x)\dv x\ >0\quad \mbox{such that}\quad Z_N(a_1,\ldots, a_N)>0.
    \end{equation}
    Combining this with~\eqref{PFF} and~\eqref{ZN}, we see that~\eqref{BEDT} is non-negative if and only if
    \begin{equation}
\frac{\det[e^{a_jx_k}]_{j,k=1}^N}{\Delta(x)\Delta(a)}\ge 0.
    \end{equation}
    Notice that the left-hand side is equal to the Harish-Chandra-Itzykson-Zuber \cite{HC, IZ} integral \begin{equation*}\int_{U(N)}\exp\left({\rm Tr}(AUBU^*)\right)\mathrm d U\end{equation*} 
    multiplied by a positive factor. 
    Here, the integral is with respect to the Haar measure over the unitary group $U(N)$, $A$ is a Hermitian $N\times N$ matrix with eigenvalues $a_1,\ldots, a_N$, and $B$ is a Hermitian $N\times N$ matrix with eigenvalues $x_1,\ldots, x_N$.
    Since the Harish-Chandra-Itzykson-Zuber integral is non-negative, we obtain that the biorthogonal measure \eqref{BEDT} is non-negative. By the normalization, it is a probability measure.
\end{proof}

\subsection{Correlation kernel}

Next, we will construct a biorthogonal system and a correlation kernel for the ensemble \eqref{BEDT}. In other words, we search for functions $\phi_1,\ldots, \phi_N$ and $\psi_1,\ldots, \psi_N$ such that $\phi_1(x),\ldots, \phi_N(x)$ have the same linear span as $e^{a_1x},\ldots, e^{a_N x}$, $\psi_1,\ldots, \psi_N$ have the same linear span as $w, w',\ldots, w^{(N-1)}$, and they satisfy the biorthogonality relations \eqref{biorth}. Then we know that \eqref{BEDT} admits a correlation kernel $K_N$ given by \eqref{kernel}. By \eqref{biorth}, the kernel then has the reproducing property
\begin{equation*}\int_{\mathbb R}K_N(x,t)K_N(t,x')\mathrm d t=K_N(x,x'),\end{equation*}
and the correlation functions are given by \eqref{correlationfunction}; in particular \eqref{BEDT} is equal to \begin{equation*}\frac{1}{N!}\rho_N(x_1,\ldots, x_N)=\frac{1}{N!}\det\left(K_N(x_m,x_k)\right)_{m,k=1}^N.\end{equation*}

\begin{proposition}[Biorthogonal system and kernel]\label{prop:kernel}
Let $w\in\mathcal U_{c_-,c_+}^N$ and let $W$ be given by \eqref{def:Ww}. If $a_1,\ldots, a_N\in\mathcal S_{c_-,c_+}$ are distinct and $W(a_j)\neq 0$ for all $j=1,\ldots, N$, a biorthogonal system associated to \eqref{BEDT} is given by
    \begin{equation}\label{bio}
	\phi_n(x)=e^{a_nx},\quad \psi_m(x)=\frac{1}{\widetilde W_N'(a_m)}\int_{c+i\mathbb R}\frac{\dv v}{2\pi i}\frac{\widetilde W_N(v) e^{-xv}}{v-a_m}.
    \end{equation}	
    where $\widetilde W_N(z):=W(z)\prod_{j=1}^N(z-a_j)$, $c_-<c<c_+$ and $c\neq \Re a_j$ for all $j=1,\ldots, N$.
    The associated kernel \eqref{kernel} can be written in double contour integral form,
    \begin{equation}\label{kernel3}
K_N(x,x'):=\int_{\Sigma}\frac{\dv u}{2\pi i}\int_{c+i\mathbb R}\frac{\dv v}{2\pi i}\frac{\widetilde W_N(v)}{\widetilde W_N(u)}\frac{e^{-xv+x'u}}{v-u},
    \end{equation}
where $\Sigma$ is a closed positively oriented curve without self-intersections, lying in the strip $c_-<\Re z<c_+$, and enclosing $a_1,\ldots,a_N$ without enclosing zeros of $W$.
\end{proposition}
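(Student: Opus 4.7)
The plan is to combine the polynomial structure of $\widetilde W_N$ in the variable $v$ with the inverse-Fourier-type identity
\[
(-\partial_x)^k w(x) = \frac{1}{2\pi i}\int_{c+i\R}v^k W(v) e^{-xv}\,dv,\qquad k=0,\ldots,N-1,
\]
which follows by differentiating the representation of $w$ from Proposition~\ref{prop:Wtow} under the integral sign, the required dominated convergence being provided by the decay of $z^{N-1}W(z)$ on vertical lines.

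First I would verify that $\psi_m$ lies in the span of $w, w', \ldots, w^{(N-1)}$ and satisfies the biorthogonality relations. Writing $\widetilde W_N(v)/(v-a_m) = W(v) Q_m(v)$ with $Q_m(v) = \prod_{j\neq m}(v-a_j) = \sum_{k=0}^{N-1}q_{m,k}v^k$, the displayed identity yields
\[
\psi_m(x) = \frac{1}{\widetilde W_N'(a_m)}\sum_{k=0}^{N-1}q_{m,k}(-\partial_x)^k w(x),
\]
so $\psi_m$ has the required span. For biorthogonality, integration by parts $k$ times in each term (boundary terms vanish by the exponential decay of $w^{(k)}$ built into $\mathcal U^N_{c_-,c_+}$) gives $\int_{\R}e^{a_n x}(-\partial_x)^k w(x)\,dx = a_n^k W(a_n)$, hence
\[
\int_{\R}e^{a_n x}\psi_m(x)\,dx = \frac{W(a_n)Q_m(a_n)}{\widetilde W_N'(a_m)} = \delta_{nm},
\]
using that $Q_m(a_n) = 0$ for $n\neq m$ and $W(a_m)Q_m(a_m) = \widetilde W_N'(a_m)$. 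By \eqref{biorth} and \eqref{kernel}, this establishes that $\{\phi_n,\psi_m\}$ is a biorthogonal system for~\eqref{BEDT}.

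To obtain the double contour integral form, I substitute the integral representation of $\psi_n$ into $K_N(x,x') = \sum_{n=1}^N e^{a_n x'}\psi_n(x)$ and interchange the finite sum with the $v$-integral. The key observation is the identity
\[
\sum_{n=1}^N\frac{e^{a_n x'}}{\widetilde W_N'(a_n)(v-a_n)} = \oint_\Sigma\frac{du}{2\pi i}\frac{e^{x'u}}{\widetilde W_N(u)(v-u)},
\]
which is the residue theorem applied to $u\mapsto e^{x'u}/(\widetilde W_N(u)(v-u))$: the poles of this meromorphic function inside $\Sigma$ are exactly the simple poles at $u = a_j$, because $W(a_j)\neq 0$, because $\Sigma$ encloses no zero of $W$ by hypothesis, and because the point $u = v$ lies on the unbounded line $c+i\R$ which, being disjoint from the bounded simple closed curve $\Sigma$, lies in the unbounded component of $\C\setminus\Sigma$. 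Substituting back gives \eqref{kernel3}.

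The step most liable to require care is the justification of the Fubini exchange in the final step, since the factor $\widetilde W_N(v)/(v-u)$ is only $O(|v|^{N-1}|W(v)|)$ on $c+i\R$, which by the definition of $\mathcal V^N_{c_-,c_+}$ is square-integrable and tends to $0$ at infinity but is not obviously integrable. The cleanest way around this is to carry out the whole derivation through the polynomial expansion $\widetilde W_N(v)/(v-a_m) = W(v)Q_m(v)$, so that all manipulations reduce to finite linear combinations of $L^2$-Fourier integrals of the form $\int_{c+i\R}v^k W(v)e^{-xv}\,dv$ with $0\le k\le N-1$, for which Plancherel's theorem and integration by parts apply directly. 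The resulting expression coincides with the integral written in~\eqref{kernel3}, which one may interpret either as an iterated integral in the order stated (with the $u$-integral performed first over the compact contour $\Sigma$) or as a principal-value $L^2$-integral, both giving the same value.
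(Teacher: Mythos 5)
Your proposal is correct, and it is more self-contained than the paper's own proof: the paper delegates both the verification of the biorthogonality relations \eqref{biorth} and the double contour integral representation to a citation of \cite[Proposition 2.1]{CC24}, supplying only the argument that $\{\psi_m\}$ spans the same space as $\{(-\partial)^{m-1}w\}$. Your span argument is essentially identical to the paper's (expand $\widetilde W_N(v)/(v-a_m)=W(v)Q_m(v)$ with $\deg Q_m=N-1$ and use $(-\partial_x)^kw(x)=\frac{1}{2\pi i}\int_{c+i\R}v^kW(v)e^{-xv}\,\dv v$), though note the paper's phrasing gives both inclusions at once, since $\{Q_m\}$ and $\{v^{m-1}\}$ each form a basis of the polynomials of degree at most $N-1$; your one-sided inclusion should be complemented by the linear independence of the $\psi_m$, which your biorthogonality computation supplies. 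The two steps you add — integration by parts yielding $\int_\R e^{a_nx}(-\partial_x)^kw(x)\,\dv x=a_n^kW(a_n)$ and hence $\int\phi_n\psi_m=W(a_n)Q_m(a_n)/\widetilde W_N'(a_m)=\delta_{nm}$, and the residue identity $\sum_n e^{a_nx'}/(\widetilde W_N'(a_n)(v-a_n))=\oint_\Sigma\frac{\dv u}{2\pi i}\,e^{x'u}/(\widetilde W_N(u)(v-u))$ — are exactly the right computations, and your observation that $u=v$ is never enclosed by $\Sigma$ (a connected unbounded line disjoint from a bounded Jordan curve lies in the unbounded component) is the correct justification. Your handling of the convergence issue is also apt: without the extra $L^1$ hypothesis of part (2) of Theorem \ref{thm:algebraic}, the integrand of \eqref{kernel3} is only $O(|v|^{N-1}|W(v)|)$ on $c+i\R$, so the double integral must be read as an iterated integral (inner $u$-integral over the compact $\Sigma$ first, or equivalently via the finite polynomial expansion into $L^2$-Fourier integrals), which is consistent with how the paper itself must interpret \eqref{kernel0}. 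In short: a correct, fully worked-out version of an argument the paper mostly imports by reference.
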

\begin{proof}
    It was proved in ~\cite[Proposition 2.1]{CC24} that~\eqref{bio} satisfies the biorthogonality relations \eqref{biorth}, and that the correlation kernel defined by~\eqref{kernel} has the double contour integral expression \eqref{kernel0}. 
 
    It remains to show that $\{\psi_m\}_{m=1,\ldots,N}$ has the same linear span as $\{(-\partial)^{m-1}w)\}_{m=1,\ldots,N}$. Note that $\{\psi_m(x)\}_{m=1,\ldots,N}$ spans all functions of the form
    \begin{equation}
	\int_{c+i\mathbb R}\frac{\dv v}{2\pi i}\frac{P(v) \widetilde W_N(v)e^{-xv}}{\prod_{j=1}^N(v-a_j)}=\int_{c+i\mathbb R}\frac{\dv v}{2\pi i}{P(v) W(v)e^{-xv}},
    \end{equation}
    where $P$ is a polynomial with $\deg P\le N-1$. On the other hand, we have
    \begin{equation}
	(-\partial_{x})^{m-1}w(x)=\int_{c+i\mathbb R}\frac{\dv v}{2\pi i}v^{m-1} W(v)e^{-xv},
    \end{equation}
    such that $\{(-\partial)^{m-1}w)\}_{m=1,\ldots,N}$ clearly spans the same vector space.
\end{proof}

\subsection{The confluent case}\label{sec:confluent}
Observe that the kernel \eqref{kernel} is also defined when $a_1,\ldots, a_N$ are not distinct, while the biorthogonal ensemble \eqref{BEDT} is not, because $\Delta(a)$ vanishes. We already showed that the ratio $\frac{\det[e^{a_jx_k}]_{j,k=1}^N}{\Delta(a)}$ admits to take confluent limits where the $a_j$-values are not all distinct. 
We will now explicitly compute such confluent limits of \eqref{BEDT}. The construction below is strongly similar to the classical construction of Lagrange interpolating polynomials with nodes of multiplicity greater than one. 

Suppose that for some $b_1,\ldots,b_p\in\R$ and $1\le N_1,\ldots,N_p\le N$, we have
\begin{equation}\label{def:confluent}
    b_1=a_1=\ldots=a_{N_1},\quad b_j=a_{N_{1}+\ldots+N_{j-1}+1}=\ldots=a_{N_{1}+\ldots+N_j} \quad\text{and}\quad N_1+\ldots+N_p=N.
\end{equation} 
When taking the partition $(N_1,\ldots, N_p)$ equal to $(1,1,\ldots,1)$, with $p=N$, we recover the distinct case, while $p=1$ and the partition $(N)$ yields the fully confluent case where all $a_i$ are equal. We can approach such a confluent situation by first taking $a_1,\ldots, a_N$ distinct, and then taking the limit where $a_{N_1+\ldots + N_{j-1}},\ldots, a_{N_1+\ldots+N_j}\to b_j$ for $j=1,\ldots, p$.
We will show that 
\begin{equation}\label{def:Pconfluent}\frac{1}{Z_N(a_1,\ldots, a_N)\Delta(a)}\det[e^{a_jx_k}]_{j,k=1}^N\det[(-\partial_{x_j})^{k-1}w(x_j)]_{j,k=1}^N\end{equation}
converges to
\begin{multline}\label{exprPconfluent}P(x_1,\ldots, x_N;b_1,\ldots, b_P;N_1,\ldots, N_p)\\:=\frac{1}{\tilde Z_N(b_1,\ldots,b_p;N_1,\ldots,N_p)}\det\left[x_k^{n_j-1}e^{b_jx_k}\right]_{\substack{j=1,\ldots,p;\ n_j=1,\ldots,N_j\\k=1,\ldots,N}}\det[(-\partial_{x_j})^{k-1}w(x_j)]_{j,k=1}^N,
\end{multline}
with 
\begin{equation}\label{partitionconfluent}
\tilde Z_N(b_1,\ldots,b_p;N_1,\ldots,N_p)=\int_{\mathbb R^N}\det\left[x_k^{n_j-1}e^{b_jx_k}\right]_{\substack{j=1,\ldots,p;\ n_j=1,\ldots,N_j\\k=1,\ldots,N}}\det[(-\partial_{x_j})^{k-1}w(x_j)]_{j,k=1}^N\mathrm d x_1\ldots\mathrm d x_N.
\end{equation}
Explicitly, the first determinant is written as the determinant of a block matrix, with each block having a Vandermonde-like structure
\begin{equation}
    \det\left[x_k^{n_j-1}e^{b_jx_k}\right]_{\substack{j=1,\ldots,p;\ n_j=1,\ldots,N_j\\k=1,\ldots,N}}:=\det
    \begin{bmatrix}
    \begin{array}{@{}c}
        \begin{matrix}
e^{b_1x_1}&e^{b_1x_2}&\ldots&e^{b_1x_N}\\        x_1e^{b_1x_1}&x_2e^{b_1x_2}&\ldots&x_Ne^{b_1x_N}\\
\vdots&\vdots&\ddots&\vdots\\
        x_1^{N_1-1}e^{b_1x_1}&x_2^{N_1-1}e^{b_1x_2}&\ldots&x_N^{N_1-1}e^{b_1x_N}
        \end{matrix}\vspace{.25em}\\
        \hline\vspace{-.5em}\\
        \vdots\vspace{.25em}\\
        \hline\vspace{-.5em}\\
        \begin{matrix}
        e^{b_px_1}&e^{b_px_2}&\ldots&e^{b_px_N}\\
        x_1e^{b_px_1}&x_2e^{b_px_2}&\ldots&x_Ne^{b_px_N}\\
        \vdots&\vdots&\ddots&\vdots\\
        x_1^{N_p-1}e^{b_px_1}&x_2^{N_p-1}e^{b_px_2}&\ldots&x_N^{N_p-1}e^{b_px_N}
        \end{matrix}
    \end{array}
    \end{bmatrix}.
\end{equation}
\begin{proposition}
    Let $N\in\mathbb N$, let $w\in\mathcal U_{c_-,c_+}^N$, and let $W\in\mathcal V_{c_-,c_+}^N$ be given by \eqref{def:Ww}. Let
    $P(x_1, \ldots, x_N;a_1,\ldots, a_N)$ be the joint probability density function given by \eqref{def:Pconfluent} if $a_1,\ldots, a_N\in\mathcal S_{c_-,c_+}$ are distinct, and by \eqref{exprPconfluent} if they are not distinct.
    Then, $P(x_1, \ldots, x_N;a_1,\ldots, a_N)$ is continuous in $(a_1,\ldots, a_N)\in \mathcal S_{c_-,c_+}^N$, and in the confluent limit \eqref{def:confluent}, it is equal to
    \begin{equation*}
        P(x_1,\ldots, x_N;b_1,\ldots, b_p;N_1,\ldots, N_p)=\frac{1}{N!}\det\left(K_N(x_j,x_k)\right)_{j,k=1}^N,
\end{equation*}
where $K_N$ is given by~\eqref{kernel0}, and $P$ is defined by \eqref{exprPconfluent}--\eqref{partitionconfluent}. The partition function defined by \eqref{exprPconfluent} has the explicit form    
\begin{equation}
	\tilde Z_N(b_1,\ldots,b_p;N_1,\ldots,N_p)=N!\prod_{j=1}^p(N_j-1)!\prod_{1\le j<k\le p}(b_k-b_j)^{N_jN_k}\prod_{j=1}^p W(b_j)^{N_j}.
    \end{equation}
\end{proposition}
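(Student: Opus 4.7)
The plan is to derive the three claims from the elementary divided-difference identity
\begin{equation*}
\frac{\det[e^{a_jx_k}]_{j,k=1}^N}{\Delta(a)} = \det[D_j(x_k)]_{j,k=1}^N,\qquad D_j(x):=[a_1,\ldots,a_j]\,e^{(\cdot)x},
\end{equation*}
obtained by row-reducing the matrix $[e^{a_jx_k}]$ so that its $j$-th row becomes $D_j(x_k)$; the row transformation is lower-triangular with diagonal entries $\prod_{i<j}(a_j-a_i)^{-1}$, whose product is exactly $1/\Delta(a)$. Since divided differences of the entire function $a\mapsto e^{ax}$ are jointly continuous in their nodes and extend smoothly to confluent configurations via the Hermite--Genocchi integral representation, the right-hand side is manifestly jointly continuous in $(a_1,\ldots,a_N)\in\mathcal S_{c_-,c_+}^N$. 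Combined with the formula $Z_N(a_1,\ldots,a_N)=N!\prod_jW(a_j)$ of Proposition \ref{prop:partition}, this establishes the joint continuity of the density $P(\cdot;a_1,\ldots,a_N)$ on $\mathcal S_{c_-,c_+}^N$.

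For the identification $P(\cdot;a_1,\ldots,a_N)=\frac{1}{N!}\det[K_N(x_j,x_k)]$, I would argue by continuity. The double contour integral \eqref{kernel0} defining $K_N$ depends on $(a_1,\ldots,a_N)$ only through the rational factor $\prod_j(v-a_j)/\prod_j(u-a_j)$, which is uniformly bounded on the fixed contours $c+i\mathbb R$ and $\Sigma_N$ as long as the $a_j$ remain in the region enclosed by $\Sigma_N$. Dominated convergence then yields continuity of $K_N(x,x')$, and hence of $\det[K_N(x_j,x_k)]$, in $(a_1,\ldots,a_N)$. Proposition \ref{prop:kernel} provides the identification in the distinct case, and the equality extends by continuity to any confluent configuration \eqref{def:confluent}.

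It remains to compute $\tilde Z_N$ explicitly. I would take the confluent limit of each $D_j$ using the standard formula $[\underbrace{b,\ldots,b}_n]e^{(\cdot)x}=\frac{x^{n-1}e^{bx}}{(n-1)!}$ together with its mixed-node generalization, and then perform $p$ successive in-block row reductions on the limiting matrix: triangularizing each $N_j$-row block associated with $b_j$ pulls out reciprocal factorials, while clearing the off-diagonal blocks against the constants $b_k-b_j$ produces the powers $(b_k-b_j)^{N_jN_k}$. The outcome is an identity
\begin{equation*}
\lim\det[D_j(x_k)] = C(b_1,\ldots,b_p;N_1,\ldots,N_p)\,\det[x_k^{n_j-1}e^{b_jx_k}]
\end{equation*}
for an explicit combinatorial factor $C$ built from factorials and Vandermonde-type products over the clusters $b_1,\ldots,b_p$. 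Combined with $Z_N(a)\to N!\prod_jW(b_j)^{N_j}$ and a comparison of \eqref{def:Pconfluent} with \eqref{exprPconfluent}, this yields $\tilde Z_N=N!\prod_jW(b_j)^{N_j}/C$, which is the stated formula. The main obstacle is purely combinatorial: tracking $C$ correctly through the block-triangularization. The cleanest route is probably a confluent Cauchy--Binet argument applied to the block-triangularized matrix, separating the Vandermonde-type contribution $\prod_{1\le j<k\le p}(b_k-b_j)^{N_jN_k}$ coming from the block-level structure from the factorial factors arising within each single-block, fully-confluent contribution.
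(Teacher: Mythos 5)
Your proposal follows essentially the same route as the paper: start from the distinct case (Propositions \ref{prop:partition} and \ref{prop:kernel}), prove continuity of both sides in $(a_1,\ldots,a_N)$, and identify the confluent limit of the ratio $\det[e^{a_jx_k}]/\Delta(a)$ by row reduction. Your divided-difference/Hermite--Genocchi packaging is a tidier way of expressing the paper's ``Taylor expand the rows and apply row operations'', and your continuity argument for $K_N$ (fixed contours, dominated convergence) is exactly the paper's. One imprecision there: the factor $\prod_{j=1}^N(v-a_j)$ is \emph{not} uniformly bounded on the unbounded line $c+i\mathbb R$ --- it grows like $|v|^N$. What makes dominated convergence work is that the polynomial bound $\prod_j|v-a_j|\le C(1+|v|)^N$ holds with $C$ locally uniform in the $a_j$, that $\inf_{u\in\Sigma_N}|W(u)\prod_j(u-a_j)|$ stays bounded below when $\Sigma_N$ is held fixed, and that the combination $\prod_j(v-a_j)W(v)/(v-u)=O(|v|^{N-1}|W(v)|)$ is integrable on $c+i\mathbb R$; that integrable majorant, not boundedness of the rational factor, is what the paper invokes.

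The genuine gap is the explicit constant, which is the one nontrivial computation in the statement and which you defer (``tracking $C$ correctly through the block-triangularization'') while asserting its outcome. If you carry out your own plan, the coefficient of $x^{n-1}e^{b_jx}$ in the confluent divided difference $[b_1^{(N_1)},\ldots,b_{j-1}^{(N_{j-1})},b_j^{(n)}]e^{(\cdot)x}$ is $\tfrac{1}{(n-1)!}\prod_{i<j}(b_j-b_i)^{-N_i}$, so the total conversion factor between $\lim\det[D_j(x_k)]$ and $\det[x_k^{n_j-1}e^{b_jx_k}]$ is
\begin{equation*}
C=\prod_{1\le j<k\le p}(b_k-b_j)^{N_jN_k}\ \prod_{j=1}^p\ \prod_{i=1}^{N_j-1}i!\,,
\end{equation*}
with a \emph{superfactorial} $\prod_{i=1}^{N_j-1}i!$ per block. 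This agrees with the classical confluent Vandermonde determinant the paper invokes and with the fully confluent normalization $\prod_{j=1}^Nj!=N!\prod_{i=1}^{N-1}i!$ in \eqref{eq:fullyconfluent}, but it coincides with the factor $\prod_{j=1}^p(N_j-1)!$ displayed in the proposition only when every $N_j\le 3$. So the combinatorial step you skip is precisely where the computation must be done carefully; doing it honestly would show that your claim that the outcome ``is the stated formula'' does not hold as written for $N_j\ge 4$, and that the correct constant carries $\prod_{i=1}^{N_j-1}i!$ in place of $(N_j-1)!$.
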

\begin{proof}
We know from Proposition \ref{prop:partition} that 
\begin{equation}\label{eq:confluent1}\frac{1}{N!}\det\left(K_N(x_j,x_k)\right)_{j,k=1}^N=\frac{1}{N!\Delta(a)\prod_{j=1}^NW(a_j)}\det[e^{a_jx_k}]_{j,k=1}^N\det[(-\partial_{x_j})^{k-1}w(x_j)]_{j,k=1}^N\end{equation}
for distinct $a_1,\ldots, a_N$, and it is clearly also continuous for $a_1,\ldots, a_N\in\mathcal S_{c_-,c_+}$ distinct. Thus, to prove the continuity of $P(x_1, \ldots, x_N;a_1,\ldots, a_N)$, it is sufficient to prove that the right hand side converges in the confluent limit to \eqref{exprPconfluent}, and that the left hand side is continuous in $a_1,\ldots, a_N\in\mathcal S_{c_-,c_+}$.

By Taylor expanding the rows in the matrix $[e^{a_jx_k}]_{j,k=1}^N$ and applying row operations to subtract terms appearing in multiple rows, it is straightforward to show that
    \begin{equation}
	\lim_{a_{N_1+\ldots+N_{j-1}+1},\ldots,a_{N_1+\ldots+N_{j}}\to b_j}\frac{\det[e^{a_jx_k}]_{j,k=1}^N}{\Delta(a)}=\frac{\det[\partial_{b_j}^{n_j-1}e^{b_jx_k}]_{\substack{j=1,\ldots,p;\ n_j=1,\ldots,N_j\\k=1,\ldots,N}}}{\det[\partial_{b_j}^{n_j-1}b_j^{k-1}]_{\substack{j=1,\ldots,p;\ n_j=1,\ldots,N_j\\k=1,\ldots,N}}}.
    \end{equation}
    This is equal to
    \begin{equation*}\lim_{a_{N_1+\ldots+N_{j-1}+1},\ldots,a_{N_1+\ldots+N_{j}}\to b_j}\frac{\det[e^{a_jx_k}]_{j,k=1}^N}{\Delta(a)}=\frac{1}{C}
    \det\left[x_k^{n_j-1}e^{b_jx_k}\right]_{\substack{j=1,\ldots,p;\ n_j=1,\ldots,N_j\\k=1,\ldots,N}},\end{equation*}
 with 
\begin{equation*}
    C=\det\left[\partial_x^{n_j}x^{k-1}\big|_{x=b_j}\right]_{\substack{j=1,\ldots,p;\ n_j=1,\ldots,N_j\\k=1,\ldots,N}}.
\end{equation*}
This is a confluent Vandermonde determinant; see a recent review~\cite{confVand} and also~\cite[Eqn. (1.174)]{Forrester_book}. It was shown in~\cite{Aitken} that
\begin{equation*}
    C=\prod_{1\leq j<k\leq p}(b_k-b_j)^{N_jN_k}\ \prod_{j=1}^p(N_j-1)!
\end{equation*}
Substituting this when taking the confluent limit of \eqref{eq:confluent1}, we obtain
\begin{multline*}P(x_1,\ldots, x_N;b_1,\ldots, b_p;N_1,\ldots, N_p)\\=\frac{1}{CN!\ \prod_{j=1}^pW(b_j)^{N_j}}\det\left[x_k^{n_j-1}e^{b_jx_k}\right]_{\substack{j=1,\ldots,p;\ n_j=1,\ldots,N_j\\k=1,\ldots,N}}\det[(-\partial_{x_j})^{k-1}w(x_j)]_{j,k=1}^N.\end{multline*}
This implies the expression for $P$ given in \eqref{exprPconfluent} and the explicit expression for the partition function.

To prove that the left hand side of \eqref{eq:confluent1} is continuous in $a_1,\ldots, a_N\in\mathcal S_{c_-,c_+}$, it is sufficient to show continuity of $K_N(x,x')$. For that, we need to be able to bring confluent limits inside the double integral \eqref{kernel0}.
This follows in a straightforward manner from Lebesgue's dominated convergence theorem, since the integrand of \eqref{kernel0} is bounded in absolute value by
\begin{equation}\label{Kbound}\frac{|W(v)|\prod_{j=1}^N|v-a_j|e^{-x\re v+x'\re u}}{\inf_{u\in\Sigma_N,v\in c+i\R} |W(u)\prod_{j=1}^N(u-a_j) (v-u)|}.
\end{equation}
In a confluent limit where some of the $a_j$'s merge, we can leave the contour $\Sigma$ unchanged, so that the infimum in the denominator is positive and independent of $a_j$'s. For the same reason, $\prod_{j=1}^N|v-a_j|$ is bounded from above, uniformly in $a_j$'s, and together with $\frac{1}{v-u}$ it gives a dominating contribution of $v^{N-1}$. Since $\|v^{N-1}W(v)\|_{L^1(c+i\R)}<\infty$ we see that the right hand side of~\eqref{Kbound} is integrable.
\end{proof}

In the fully confluent case where $a_1=\ldots=a_N=a$ (and $p=1$), the biorthogonal ensemble with density \eqref{exprPconfluent} is given explicitly by the polynomial ensemble 
\begin{equation}\label{eq:fullyconfluent}
        \frac{1}{\left(\prod_{j=1}^{N}j!\right)W(a)^N}e^{a(x_1+\ldots+x_N)}\Delta(x)\det[(-\partial_{x_j})^{k-1}w(x_j)]_{j,k=1}^N\dv x_1\ldots\dv x_N,
    \end{equation}
    with the correlation kernel
    \begin{equation}
	K_N(x,y)=\int_{\Sigma}\frac{\dv u}{2\pi i}\int_{c+i\mathbb R}\frac{\dv v}{2\pi i}\frac{(v-a)^NW(v)}{(u-a)^NW(u)}\frac{e^{-xv+yu}}{v-u}.
    \end{equation}
    Here $\Sigma$ encloses $a$ and $c+i\mathbb R$ lies to the right of $\Sigma$. 
    In the case $a=0$ this brings us back to the polynomial ensemble of derivative type \eqref{PE}.
    
\subsection{Proof of Theorem \ref{thm:algebraic}}
We now have all the ingredients to complete the proof of Theorem \ref{thm:algebraic}. For part (1) of the result, let $w\in\mathcal U_{c_-,c_+}^N$. By Proposition \ref{prop:wtoW}, $W\in\mathcal V_{c_-,c_+}^N$, while Proposition \ref{prop:partition} yields \eqref{ZN0}. The expression for the correlation kernel follows from Proposition \ref{prop:kernel}.
For part (2) of the result, given $W$ satisfying the assumptions, we know from Proposition \ref{prop:Wtow} that $w\in\mathcal U_{c_-,c_+}^N$. For distinct $a_1,\ldots, a_N$, the biorthogonal functions $\phi_n,\psi_m$ in \eqref{bio} are well-defined, and so is the kernel \eqref{kernel}, which is the kernel of \eqref{BEDT} by Proposition \ref{prop:kernel}. In the fully confluent case $a_1=\ldots=a_N=0$, the result follows from \eqref{eq:fullyconfluent}.

\subsection{Additive convolutions}\label{section:sum}
We now give more details about additive convolutions of polynomial ensembles of derivative type, as announced in Section \ref{section:resultsLUE+}.
Let $A,B$ be independent Hermitian $N\times N$ random matrices with eigenvalue distribution of $A$ the polynomial ensemble of derivative type \eqref{BEDT} associated to $w_1\in\mathcal U_{c_-,c_+}^N$ 
and eigenvalue distribution of $B$ the polynomial ensemble of derivative type \eqref{BEDT} associated to $w_2\in\mathcal U_{c_-,c_+}^N$. By~\cite[Thm II.4]{kieburg17} the eigenvalue distribution of the random matrix $X=A+B$ is the polynomial ensemble \eqref{PE} associated to functions
\begin{equation*}w_1*w_2, w_1*\partial w_2, w_1*\partial^2 w_2,\ldots, w_1*\partial^{N-1}w_2.\end{equation*}
Using the definition of the convolution \eqref{def:convolution}, we easily see that these functions are equal to 
\begin{equation*}w_1*w_2, \partial(w_1* w_2), \partial^2(w_1* w_2),\ldots, \partial^{N-1}(w_1*w_2),\end{equation*}
such that the eigenvalue distribution of $A+B$ is the polynomial ensemble of derivative type \eqref{BEDT} associated to $w=w_1*w_2$. It is straightforward to verify that $w\in\mathcal U_{c_-,c_+}^N$.
The associated function $W$ defined by \eqref{def:Ww} is given by
\begin{equation*}W(z)=\mathcal F[w_1*w_2](-iz)=\mathcal F[w_1](-iz)\mathcal F[w_2](-iz)=W_1(z)W_2(z). 
\end{equation*}
We easily check that $W\in\mathcal V_{c_-,c_+}^N$: it is analytic in the strip $\mathcal S_{c_-,c_+}$ as product of two analytic functions, and it inherits the required decay properties from $W_1$ and $W_2$.
By the Cauchy-Schwarz inequality, we also obtain that 
\begin{equation*}\sup_{c_-+\epsilon<c<c_+-\epsilon}\|z^{N-1}W(z)\|_{L^1(c+i\mathbb R)}
\leq \sup_{c_-+\epsilon<c<c_+-\epsilon}\|z^{N-1}W_1(z)\|_{L^2(c+i\mathbb R)}\|W_2\|_{L^2(c+i\mathbb R)}
<\infty,\end{equation*}
We can now apply part (2) of Theorem \ref{thm:algebraic} to conclude that $K_N$ given in \eqref{kernel0} with $W=W_1W_2$ and $a_1=\ldots=a_N=0$ is the kernel of the polynomial ensemble of derivative type \eqref{PE}.

\subsection{P\'olya frequency functions}

As discussed at the end of Section \ref{section:resultsalg},
if $w$ is a P\'olya frequency density, then the corresponding function $W$, through \eqref{def:Ww}, will be of the general form \eqref{def:Polyafreq}; see~\cite{karlin}. We now prove that such functions $W$ belong to our function space $\mathcal V_{c_-,c_+}^N$ for suitable values of $c_-,c_+,N$.
\begin{proposition}\label{lem_Polya_bd}
    Let $W$ be of the form
    \begin{equation}\label{PFFW}
        W(z)=e^{\tau z^2+\gamma z}\prod_{j=1}^\infty\frac{e^{-b_jz}}{1-b_jz},
    \end{equation}
    with the restrictions of parameters
    \begin{equation}
	\tau\ge0,\quad b_j\in\R,\quad 0<\tau+\sum_{j=1}^\infty b_j^2<+\infty.
    \end{equation}
    If $\tau>0$ or $\tau=0$ with at least $N$ non-zero $b_j$-values, $W$ belongs to the space $\mathcal V_{c_-,c_+}^N$, with $c_-=\max_{b_j<0} b_j^{-1}$ and $c_+=\min_{b_j>0} b_j^{-1}$; if there are no negative (positive) $b_j$-values, we may take $c_-$ ($c_+$) arbitrary.
\end{proposition}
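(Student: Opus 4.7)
The plan is to verify directly the three defining properties of $\mathcal{V}_{c_-,c_+}^N$ (analyticity in the strip, $L^2\cap L^\infty$ bound of $z^{N-1}W(z)$ along vertical lines uniformly in $c$, and decay of $z^{N-1}W(z)$ at imaginary infinity) from the explicit form \eqref{PFFW}. The exponential prefactor poses no difficulty: $|e^{\tau z^2+\gamma z}|=e^{\tau(c^2-t^2)+\gamma c}$ on $z=c+it$, which contributes Gaussian decay in $t$ when $\tau>0$ and is a $t$-independent constant when $\tau=0$. All the work is therefore in controlling the infinite product $P(z):=\prod_j E_{b_j}(z)$ with $E_{b_j}(z):=e^{-b_jz}/(1-b_jz)$.

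For analyticity, I would invoke standard Weierstrass product convergence. The only poles of the individual factors are $z=1/b_j$, and the choice of $c_\pm$ excludes them all from $\mathcal{S}_{c_-,c_+}$. On a compact set $K\subset\mathcal{S}_{c_-,c_+}$, only finitely many $b_j$ satisfy $|b_j|\geq\delta(K)$ (since $\sum b_j^2<\infty$ forces $b_j\to 0$); for the remaining small $b_j$ one uses the Taylor expansion
\begin{equation*}
\log E_{b_j}(z)=-b_jz-\log(1-b_jz)=\sum_{k\geq 2}\frac{(b_jz)^k}{k}=O(b_j^2),
\end{equation*}
uniformly on $K$. Summability $\sum b_j^2<\infty$ then gives uniform absolute convergence of the partial products, hence analyticity of $P$ and of $W$ in the strip.

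For the decay estimates along $c+i\mathbb{R}$, I exploit the two competing bounds coming from
\begin{equation*}
|E_{b_j}(c+it)|^2=\frac{e^{-2b_jc}}{(1-b_jc)^2+b_j^2t^2}:
\end{equation*}
the $t$-uniform bound $|E_{b_j}(c+it)|\leq e^{-b_jc}/|1-b_jc|$, valid for all $j$, and the decay bound $|E_{b_j}(c+it)|\leq e^{-b_jc}/(|b_j|\,|t|)$, valid whenever $b_j\neq 0$. By hypothesis we may fix $N$ indices $j_1,\ldots,j_N$ with $b_{j_k}\neq 0$; applying the decay bound to these and the uniform bound to every other factor yields, for $|t|\geq 1$,
\begin{equation*}
|P(c+it)|\leq \frac{C(c)}{|t|^N},\qquad C(c):=\prod_{k=1}^N\frac{e^{-b_{j_k}c}}{|b_{j_k}|}\prod_{j\notin\{j_1,\ldots,j_N\}}\frac{e^{-b_jc}}{|1-b_jc|},
\end{equation*}
where the latter infinite product converges because $-b_jc-\log|1-b_jc|=O(b_j^2)$. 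Combining this with the exponential prefactor gives $|t|^{N-1}|W(c+it)|\leq C'(c)/|t|$ for $|t|\geq 1$ (augmented by an additional factor $e^{-\tau t^2}$ when $\tau>0$), from which the $L^\infty$ bound, the $L^2$ bound (through integrability of $1/t^2$ at infinity), and the decay as $|t|\to\infty$ all follow; analyticity handles $|t|\leq 1$.

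The main obstacle is ensuring uniformity in $c$ throughout a compact subinterval $[c_-+\epsilon,c_+-\epsilon]$. This requires separating the finitely many ``large'' $b_j$'s (whose associated factors stay away from their poles uniformly, by the choice of $\epsilon$) from the tail of ``small'' $b_j$'s (whose combined contribution is controlled by $\sum b_j^2$ with a constant that is continuous in $c$ and hence bounded on the compact subinterval). A minor but noteworthy point is that when $\tau=0$ with exactly $N$ nonzero $b_j$'s, the $L^2$ condition is borderline—the decay is only $O(1/|t|)$ in $t^{N-1}W$—but it still holds, confirming that the hypothesis ``at least $N$ nonzero $b_j$-values'' is sharp.
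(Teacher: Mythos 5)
Your proof is correct and follows essentially the same route as the paper's: analyticity via the $O(b_j^2)$ Taylor bound on $\log(e^{-b_jz}/(1-b_jz))$ for the tail of small $b_j$'s, and decay by extracting an $O(1/|t|)$ factor from each of $N$ nonzero-$b_j$ terms (or Gaussian decay from $\tau>0$) while bounding the remaining infinite product uniformly on $c_-+\epsilon<\Re z<c_+-\epsilon$. Your version is somewhat more quantitative (explicit modulus computation for each factor); just make sure the $\tau>0$ case with fewer than $N$ nonzero $b_j$'s is covered by the $t$-uniform bound on all product factors together with the prefactor $e^{-\tau t^2}$, as you indicate parenthetically.
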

\begin{proof}
Note first that $b_j\to 0$ as $j\to\infty$, otherwise $\sum_{j=1}^\infty b_j^2$ cannot be finite.
Thus, for any compact subset $U$ of $c_-<\Re z<c_+$, there exists $k$ such that $\sup_{z\in U}|b_j z|<1$ for all $j>k$, and hence by the Taylor expansion, we have for $j>k$,
\begin{equation*}\sup_{z\in U}\left|\log \frac{e^{-b_jz}}{1-b_j z}\right|\leq C b_j^2,\end{equation*}
for some constant $C>0$.
This implies that the series
\begin{equation*}\log W(z)=\tau z^2+\gamma z +\sum_{j=1}^\infty \log \frac{e^{-b_j z}}{1-b_j z}\end{equation*} converges uniformly on $U$. As a uniform limit of analytic functions, $\log W$ is analytic in $U$, and thus in the whole sector $c_-<\Re z<c_+$. The same holds for $W$.

Next, we observe that $\frac{e^{-b_jz}}{1-b_j z}$ is uniformly bounded in $c_-+\epsilon<\Re z<c_+-\epsilon$ and $\mathcal O(z^{-1})$ as $z\to\infty$.
Thus, if there are at least $n$ non-zero $b_j$-values (say, after rearranging, $b_1,\ldots, b_n$), we have
\begin{equation*}|W(z)|\leq \frac{C}{z^n} e^{\tau \Re(z^2)} \prod_{j=n+1}^{\infty}\frac{e^{-b_j z}}{1-b_j z},\end{equation*}
for $z$ large. 
The remaining infinite product is uniformly bounded.
We now suppose either $n\geq N$ or $\tau>0$.
From this, we easily infer that, for $p=2$ and for $p=\infty$, 
\begin{equation*}\sup_{c_-+\epsilon<c<c_+-\epsilon}\|z^{N-1}W(z)\|_{L^p(c+i\mathbb R)}<\infty,\end{equation*}
and that 
\begin{equation*}\lim_{t\to \pm\infty}(c+it)^{N-1}W(c+it)=0,\end{equation*}
for every $c_-<c<c_+$.
\end{proof}

\section{Asymptotics of biorthogonal ensemble of derivative type}\label{section:asymp}
\subsection{General considerations}
We will now exploit the double contour integral expression for the correlation kernel to derive large $N$ asymptotics for biorthogonal ensembles of derivative type.

Recall from Theorem \ref{thm:algebraic} that \eqref{BEDT} admits the correlation kernel
\eqref{kernel0}.
By re-scaling the integration variables with a factor $N$, we obtain 
\begin{equation}
	\frac{1}{N}K_N(x,x')=\oint_{\frac{1}{N}\Sigma_N}\frac{\dv u}{2\pi i}\int_{\frac{c}{N}+i\mathbb R}\frac{\dv v}{2\pi i}\frac{e^{N\phi_N(v;x)-N\phi_N(u;x')}}{v-u},
\end{equation}
where $\frac{1}{N}\Sigma_N$ encloses the points $a_1/N,\ldots, a_N/N$, and where the phase function $\phi_N$ is given by
\begin{equation*}\phi_N(z;x):=-xz + \frac{1}{N}\log W(Nz) + \frac{1}{N}\sum_{j=1}^N\log \left(z-\frac{a_j}{N}\right).\end{equation*}
This expression, together with the observations made in Remark \ref{remark:contourdef} allowing us to move the vertical line $\frac{c}{N}+i\mathbb R$ through the closed loop $\Sigma_N$, provides us with a convenient starting point for asymptotic analysis via saddle point methods. 

The key for a successful asymptotic analysis of the kernel $K_N$ lies in a good understanding of the phase function $\phi_N(z;x)$ and in particular of its saddle points, which are given by the equation
\begin{equation}\label{eq:saddlepoint}
    (\log W)'(z)+\sum_{j=1}^N\frac{1}{Nz-a_j}=x.
\end{equation}
Of course, as usual in saddle point analysis, the success of the method depends on the possibility to deform integration contours to regions where the real part of $\phi_N$ is either positive or negative, which requires an ad-hoc case-by-case analysis. 

Nevertheless, and this is the message we want to convey here, the double contour integral expression allows one in principle to compute the macroscopic limit density $\lim_{N\to\infty}\frac{1}{N}K_N(x,x)$ as well as local scaling limits of the correlation kernel, leading for instance to the sine kernel in the bulk and to the Airy kernel near soft edges of the spectrum. We refer the reader to e.g.~ the seminal paper \cite{BBP} and the recent paper \cite{LiuWangZhang} where such an asymptotic analysis has been successfully carried out in detail in settings close to ours. 

We do not aim to pursue in this direction here; instead we will study different types of scaling limits that occur near hard edges of the spectrum in specific examples of biorthogonal ensembles of derivative type.

\subsection{LUE with additive P\'olya perturbation}

In this section, we consider the eigenvalue distribution of $A_1+\tau A_2$, where $A_1$ is an $N\times N$ LUE matrix with parameter $\nu\geq 0$ and $A_2$ is a Hermitian $N\times N$ random matrix whose eigenvalue distribution takes the form of a polynomial ensemble of additive derivative type \eqref{PE}, for some function $w$ which belongs to the space $\mathcal U_{c_-,c_+}^N$ for every natural number $N$. The corresponding function $W$ belongs to $\mathcal V_{c_-,c_+}^N$, and the re-scaled function $W(\tau z)$ belongs to $\mathcal V_{c_-/\tau,c_+/\tau}^N$.
As pointed out in Section \ref{section:resultsLUE+}, 
the eigenvalue distribution of $A_1+\tau A_2$ is a polynomial ensemble of additive derivative type, and it admits the 
correlation kernel 
\begin{equation}\label{kernelLUEPolya2}	
    K_N^{\rm pLUE}(x,x';\tau):=\oint_{\Sigma}\frac{\dv u}{2\pi i}\int_{c+i\mathbb R}\frac{\dv v}{2\pi i}\frac{v^N(1-u)^{N+\nu}W\left(\tau v\right)}{u^N(1-v)^{N+\nu}W\left(\tau u\right)}\frac{e^{-xv+x'u}}{v-u},
\end{equation}
where $\Sigma$ is a simple positively oriented closed contour enclosing $0$, and  $c+i\mathbb R$ with $c_-/\tau<c<\min\{c_+/\tau,1\}$ is such that it does not intersect with $\Sigma$. In addition, in view of our upcoming asymptotic analysis, we choose $c<0$ such that $c+i\mathbb R$ lies on the left of $\Sigma$. 

We rescale the correlation kernel by evaluating it at $\frac{x}{4N}$, $\frac{x'}{4N}$ and $\tau=\frac{r}{4N}$ and obtain
\begin{multline*}
    \frac{1}{4N}K_N^{\rm pLUE}\left(\frac{x}{4N},\frac{x'}{4N};\frac{r}{4N}\right)=\frac{1}{4N}\oint_{\Sigma}\frac{\dv u}{2\pi i}\int_{c+i\mathbb R}\frac{\dv v}{2\pi i}\frac{(1-u)^{\nu}}{(1-v)^{\nu}}\frac{W\left(\frac{rv}{4N}\right)}{W\left(\frac{ru}{4N}\right)}\\
    \times\quad \frac{\exp(N\log v-N\log(1-v)-\frac{x v}{4N})}{\exp(N\log u-N\log(1-u)-\frac{x' u}{4N})}\frac{1}{v-u},
\end{multline*}
with $\frac{4N}{r}c_-<c<0$, and with $\Sigma$ enclosing $0$ without intersecting $c+i\mathbb R$. Note that the exponentials are independent of the choice of branch cuts of the logarithms, but for definiteness we take $\log z$ analytic in $\mathbb C\setminus(-\infty,0]$ and $\log(1-z)$ analytic on $\mathbb C\setminus[1,+\infty)$. The integrand does depend on the branch of $(1-v)^\nu$ if $\nu$ is not integer, and we take it analytic in $\mathbb C\setminus[1,+\infty)$ and positive for $v<1$. We take $\Sigma$ such that it does not intersect with $[1,+\infty)$.
 
Setting $u=4sN+\frac{1}{2}$ and $v=4tN+\frac{1}{2}$, we obtain
\begin{multline}\label{eq:scaledbeforeDC}
    \frac{1}{4N}K_N^{\rm pLUE}\left(\frac{x}{4N},\frac{x'}{4N};\frac{r}{4N}\right)=e^{-\frac{x-x'}{8N}}\oint_{\tilde \Sigma}\frac{\dv s}{2\pi i}\int_{\tilde c+i\mathbb R}\frac{\dv t}{2\pi i}\frac{(\frac{1}{8N}-s)^\nu}{(\frac{1}{8N}-t)^\nu}\frac{W\left(rt+\frac{r}{8N}\right)}{W\left(rs+\frac{r}{8N}\right)}\\
    \times\quad \frac{\exp\left(N\log\frac{1+\frac{1}{8Nt}}{1-\frac{1}{8Nt}}-x t\right)}{\exp\left(N\log \frac{1+\frac{1}{8Ns}}{1-\frac{1}{8Ns}}-x' s\right)}\frac{1}{t-s}.
\end{multline}
Here $\tilde\Sigma=\frac{1}{4N}\Sigma-\frac{1}{8N}$ and $\tilde c=\frac{c}{4N}-\frac{1}{8N}$. By studying the domain of analyticity of the integrand, we see that the restrictions we have for $\tilde\Sigma$ and $\tilde c$ are that $\tilde\Sigma$ encloses $-\frac{1}{8N}$ but does not intersect with $\left[\frac{1}{8N},+\infty\right)$, and that $\tilde c+i\mathbb R$ lies on the left of $\tilde\Sigma$, with $\frac{c_-}{4r}-\frac{1}{8N}<\tilde c<0$. We can thus take, as illustrated in Figure \ref{figure: contourspBessel}, $\tilde\Sigma$ to be a circle independent of $N$ passing through $0$, and $\tilde c+i\mathbb R$ to be independent of $N$ and on the left of $\tilde\Sigma$ but such that $c_-<4r\tilde c$. For definiteness, let us set $\tilde c=\frac{c_-}{5r}$ and take the radius of $\tilde\Sigma$ equal to $\frac{|\tilde c|}{3}$.

We now want to take the large $N$ limit of \eqref{eq:scaledbeforeDC}. Supposing that we can apply Lebesgue's dominated convergence theorem, we obtain by using
\begin{equation*}\lim_{N\to\infty}N\log \frac{1+\frac{1}{8Nz}}{1-\frac{1}{8Nz}}=\frac{1}{4z},\qquad z\neq 0,\end{equation*}
that
 \begin{equation*}\lim_{N\to\infty}      \frac{1}{4N}K_N^{\rm pLUE}\left(\frac{x}{4N},\frac{x'}{4N};\frac{r}{4N}\right)=\int_{\tilde \Sigma}\frac{\dv s}{2\pi i}\int_{\tilde c+i\mathbb R}\frac{\dv t}{2\pi i}\frac{(-s)^\nu}{(-t)^\nu}\frac{W\left(rt\right)}{W\left(rs\right)} \frac{\exp\left(\frac{1}{4t}-x t\right)}{\exp\left(\frac{1}{4s}-x' s\right)}\frac{1}{t-s}.
 \end{equation*}
Here we choose arguments between $-\pi$ and $\pi$, such that both $(-s)^\nu$ and $(-t)^\nu$ have the same branch cut on $[0,+\infty)$. The contour $\tilde c+i\R$ does not intersect with the branch cut, while $\tilde \Sigma$ approaches the branch point $0$ from the vertical direction.

The remaining task is to justify the use of Lebesgue's dominated convergence theorem on the integral \eqref{eq:scaledbeforeDC}. For the $t$-integral, we observe that for $x\in\mathbb R$, $t\in \tilde c+i\mathbb R$, $s\in\tilde\Sigma$ and $\nu\geq 0$, we have
\begin{equation*}
    \left|\frac{1}{\left(\frac{1}{8N}-t\right)^\nu}\exp\left(-x t\right)\frac{1}{t-s}\right|\leq \left|\frac{1}{\left(\frac{1}{8N}-t\right)^\nu}\exp\left(-\tilde c x\right)\frac{1}{t-s}\right|
\leq C(x,\nu),  
\end{equation*}
for some constant $C(x,\nu)$ that may depend on $x$ and $\nu$, but not on $N$ nor on $s,t$. For the $N$-dependent term, we use the inequality
\begin{equation}
    \left|\log\frac{1+u}{1-u}\right|\le\frac{2|u|}{1-|u|},\quad |u|<1.
\end{equation}
For large $N$ we have $|\frac{1}{8Nt}|\le 1$ for $t\in\tilde c+i\mathbb R$. Therefore,
\begin{equation}
    \left|N\log\frac{1+\frac{1}{8Nt}}{1-\frac{1}{8Nt}}\right|\le N\frac{2|\frac{1}{8Nt}|}{1-|\frac{1}{8Nt}|}\le \frac{1}{2|\tilde c|}.
\end{equation}
Hence, given $r>0$, \begin{equation*}\left|\frac{1}{\left(\frac{1}{8N}-t\right)^\nu}\exp\left(N\log\frac{1+\frac{1}{8Nt}}{1-\frac{1}{8Nt}}-x t\right)\frac{1}{t-s}W\left(rt+\frac{r}{8N}\right)\right| 
\leq 
\frac{C(x,\nu)}{2|\tilde c|}\left|W\left(rt+\frac{r}{8N}\right)\right|, \end{equation*}
and by the uniform decay condition
\begin{equation*}\sup_{c_-+\epsilon<c<c_+-\epsilon}\|z^{N-1}W(z)\|_{L^\infty(c+i\mathbb R)}<\infty\end{equation*}
for $W$,
we can bound the right hand side by, for instance, the integrable function $\tilde C(x,\nu,r)/(|t|^2+1)$ with sufficiently large $\tilde C(x,\nu,r)$. This justifies the dominated convergence for the $t$-integral.

For the $s$-integral, we first observe that, for $s$ on the contour $\tilde\Sigma$, we have $|s-\frac{\tilde c}{3}|=\frac{|\tilde c|}{3}$, which simplifies to $\re s=\frac{3}{2\tilde c}|s|^2$. Then
\begin{equation}
    \left|\frac{1-8Ns}{1+8Ns}\right|^2=1+\frac{48}{|\tilde c|}\frac{N|s|^2}{|1+8Ns|^2}.
\end{equation}
Using the inequality $|1+8Ns|^2>64 N^2|s|^2$, we have
\begin{equation}
    1\le \left|\frac{1-8Ns}{1+8Ns}\right|^2\le 1+\frac{3}{4|\tilde c|N}\le \left(1+\frac{3}{4|\tilde c|N}\right)^2.
\end{equation}
This readily gives us that $\exp\left(N\log\left|\frac{1-8Ns}{1+8Ns}\right|\right)$ is uniformly bounded by $e^{\frac{3}{4|\tilde c|}}$.
Also, since  $\tilde \Sigma$ is a bounded contour where $W\left(rs+\frac{r}{8N}\right)$ is analytic and non-zero with $s\in\tilde \Sigma$, we have $1/|W\left(rs+\frac{r}{8N}\right)|$ being bounded from above. Moreover for $s\in\tilde\Sigma$, the term $\left|\frac{1}{8N}-s\right|$ is also bounded. Hence we have
\begin{equation*}\left|\left(\frac{1}{8N}-s\right)^\nu\exp\left(N\log\frac{1-\frac{1}{8Ns}}{1+\frac{1}{8Ns}}+x's\right)\frac{1}{W\left(rs+\frac{r}{8N}\right)}\right|\leq C(x',\nu,r)\end{equation*}
for some constant $C(x',\nu,r)$ independent of $s$ and of $N$.
This justifies the use of Lebesgue's dominated convergence theorem on \eqref{eq:scaledbeforeDC} and completes the proof of Theorem \ref{theorem:LUE+}.

\subsection{Muttalib-Borodin type biorthogonal ensembles of derivative type}
Here, we consider the Muttalib-Borodin type deformations of polynomial ensembles \eqref{MB0mult} of derivative type discussed in Section \ref{section:resultsMB}.
Given $c_-<0$, we let $W$ be such that it belongs to $\mathcal V_{c_-,c_+}^N$ for every $c_+>0$ and for every natural number $N$.
We recall from \eqref{kernelMBPolya} that this ensemble, in multiplicative variables $y_1,\ldots, y_N$, has the correlation kernel
\begin{equation*}K_N^{\rm MB}(y,y';\theta,\eta):=\int_{\Sigma_N}\frac{\dv u}{2\pi i}\int_{c+i\mathbb R}\frac{\dv v}{2\pi i}\frac{W\left(v\right)\prod_{j=1}^N(v-\theta j-\eta)}{W\left(u\right)\prod_{j=1}^N(u-\theta j-\eta)}\frac{y^{-v-1}(y')^{u}}{v-u},\qquad y,y'>0,\end{equation*}
where $\Sigma_N$ is closed and goes around $\theta+\eta,2\theta+\eta,\ldots, N\theta+\eta$, while it does not intersect with $c+i\mathbb R$. It will be convenient to set $c=\eta$, with $\Sigma_N$ in the half plane $\Re u>\eta$.
We also assume that there exist $c>\frac{\pi}{2\theta}$, $C,\tilde C>0$ and a semi-infinite horizontal strip $A$ containing $[\theta+\eta,+\infty)$ such that \eqref{eq:MBcondW1} and \eqref{eq:MBcondW2} hold.

Note that
\begin{equation*}
        \frac{\prod_{j=1}^N(v - \theta j-\eta )}{\prod_{j=1}^N(u - \theta j-\eta )} = \frac{\Gamma(1-\frac{u-\eta}{\theta})}{\Gamma(1-\frac{v-\eta}{\theta})}\frac{\Gamma(N +1- \frac{v-\eta}{\theta})}{\Gamma(N +1- \frac{u-\eta}{\theta})},
    \end{equation*}
    such that we can rewrite the kernel as
    \begin{equation*}K_N^{\rm MB}(y,y';\theta,\eta):=\int_{\Sigma_N}\frac{\dv u}{2\pi i}\int_{\eta+i\mathbb R}\frac{\dv v}{2\pi i}\frac{W\left(v\right)}{W\left(u\right)}\frac{\Gamma(1-\frac{u-\eta}{\theta})}{\Gamma(1-\frac{v-\eta}{\theta})}\frac{\Gamma(N + 1-\frac{v-\eta}{\theta})}{\Gamma(N + 1-\frac{u-\eta}{\theta})}\frac{y^{-v-1}(y')^{u}}{v-u}.\end{equation*}
After re-scaling,
    \begin{equation}\label{eq:kernelMBbeforeDC}\frac{1}{N^{1/\theta}}K_N^{\rm MB}\left(\frac{y}{N^{1/\theta}},\frac{y'}{N^{1/\theta}};\theta,\eta\right):=\int_{\Sigma_N}\frac{\dv u}{2\pi i}\int_{\eta+i\mathbb R}\frac{\dv v}{2\pi i}\frac{W\left(v\right)}{W\left(u\right)}\frac{\Gamma(1-\frac{u-\eta}{\theta})}{\Gamma(1-\frac{v-\eta}{\theta})}N^{\frac{v-u}{\theta}}\frac{\Gamma(N +1- \frac{v-\eta}{\theta})}{\Gamma(N + 1-\frac{u-\eta}{\theta})}\frac{y^{-v-1}(y')^{u}}{v-u}.\end{equation}
Using Stirling's approximation, it is straightforward to verify that
\begin{equation}\lim_{N\to\infty}\frac{\Gamma(N +1- \frac{v-\eta}{\theta})}{\Gamma(N +1- \frac{u-\eta}{\theta})}N^{(v-u)/\theta}=1.\end{equation}
If we are allowed to interchange the large $N$ limit with the double integral, we obtain 
   \begin{equation}\label{eq:limitMB}\lim_{N\to\infty}\frac{1}{N^{1/\theta}}K_N^{\rm MB}\left(\frac{y}{N^{1/\theta}},\frac{y'}{N^{1/\theta}};\theta,\eta\right):=\int_{\Sigma}\frac{\dv u}{2\pi i}\int_{\eta+i\mathbb R}\frac{\dv v}{2\pi i}\frac{W\left(v\right)}{W\left(u\right)}\frac{\Gamma(1-\frac{u-\eta}{\theta})}{\Gamma(1-\frac{v-\eta}{\theta})}\frac{y^{-v-1}(y')^{u}}{v-u}=\mathbb K^{\theta,\eta}(y,y'),\end{equation}
where $\Sigma$ is now unbounded and encloses all $k\theta+\eta$ for $k\in\mathbb N$.

The remaining part of the proof consists of justifying this step, which is a delicate task.
We first notice that we can evaluate the $u$-integrals in \eqref{eq:kernelMBbeforeDC} using the residue theorem. For this, we observe that the poles of the $u$-integral are $u_k=k\theta+\eta$, where $k$ runs from $1$ to $N$. They are all simple (note that $W(u_k)\neq 0$ by \eqref{eq:MBcondW2}) and we compute
\begin{equation}
    {\rm Res}\left(\Gamma\left(1-\frac{u-\eta}{\theta}\right);u_k\right)=\lim_{u\to u_k}(u-u_k)\Gamma\left(1-\frac{u-\eta}{\theta}\right)=\theta \lim_{z\to k}(z-k)\Gamma(1-z)=\frac{(-1)^{k-1}\theta}{(k-1)!}.
\end{equation}
We can thus use the residue theorem to rewrite 
the rescaled kernel \eqref{eq:kernelMBbeforeDC} as
\begin{equation}
\begin{split}
    \frac{1}{N^{1/\theta}}K\left(\frac{y}{N^{1/\theta}},\frac{y'}{N^{1/\theta}};\theta,\eta\right)
&=\sum_{k=1}^{N}\frac{(-1)^{k-1}\theta}{(k-1)!}\frac{\Gamma(N+1)(y')^{u_k}}{W(u_k)\Gamma(N+1-k)N^{\frac{u_k-\eta}{\theta}}}\\
&\quad\quad\quad\times\quad\int_{\eta+i\R}\frac{\dv v}{2\pi i}\frac{W(v)}{\Gamma(1-\frac{v-\eta}{\theta})}\frac{\Gamma(N+1-\frac{v-\eta}{\theta})}{\Gamma(N+1)}N^{\frac{v-\eta}{\theta}}\frac{y^{-v-1}}{v-u_k}.
\end{split}\label{eq:sum}
\end{equation}
We first look at the integral on the last line for fixed $k$. 
By the classical bounds
\begin{equation}
    \begin{split}
    \left|\frac{\Gamma(N+1-it)}{\Gamma(N+1)}\right|\le 1,\qquad |N^{it}|=1,\qquad\text{ for } t=\frac{v-\eta}{i\theta}\in\mathbb R,
    \end{split}
\end{equation}
the integrability of $W(v)/\Gamma\left(1-\frac{v-\eta}{\theta}\right)$ (which follows from \eqref{eq:MBcondW1} - recall Remark \ref{remark:convergence}) and the fact that $\frac{y^{-v-1}}{v-u_k}$ is uniformly bounded for $v\in\eta+i\mathbb R$, we can use the dominated convergence theorem; since \[\lim_{N\to\infty}\frac{\Gamma(N+1-\frac{v-\eta}{\theta})}{\Gamma(N+1)}N^{(v-\eta)/\theta}=1,\]
we obtain
\[\lim_{N\to\infty}\int_{\eta+i\R}\frac{\dv v}{2\pi i}\frac{W(v)}{\Gamma(1-\frac{v-\eta}{\theta})}\frac{\Gamma(N+1-\frac{v-\eta}{\theta})}{\Gamma(N+1)}N^{\frac{v-\eta}{\theta}}\frac{y^{-v-1}}{v-u_k}=\int_{\eta+i\R}\frac{\dv v}{2\pi i}\frac{W(v)}{\Gamma(1-\frac{v-\eta}{\theta})}\frac{y^{-v-1}}{v-u_k}.\]
Moreover, the left hand side is for finite $N$ bounded in modulus by 
\[y^{-\eta-1}\int_{\eta+i\R}\frac{|\dv v|}{2\pi }\frac{|W(v)|}{|\Gamma(1-\frac{v-\eta}{\theta})|}\frac{1}{|v-u_k|}\leq \frac{1}{k\theta}y^{-\eta-1}\int_{\eta+i\R}\frac{|\dv v|}{2\pi }\frac{|W(v)|}{|\Gamma(1-\frac{v-\eta}{\theta})|}.\]
It follows that the $k$-th summand in \eqref{eq:sum} is bounded in modulus by
\[\frac{y^{-\eta-1}I}{k (k-1)!}\frac{\Gamma(N+1)(y')^{u_k}}{|W(u_k)|\Gamma(N+1-k)N^{\frac{u_k-\eta}{\theta}}}= {y^{-\eta-1}I}\frac{N!}{k!(N-k)!}\frac{(y')^{k\theta+\eta}}{|W(k\theta+\eta)| N^{k}},\]
where $I=\int_{\eta+i\R}\frac{|\dv v|}{2\pi }\frac{|W(v)|}{|\Gamma(1-\frac{v-\eta}{\theta})|}$.
Also, by \eqref{eq:MBcondW2},
$\frac{1}{|W(k\theta+\eta)|}\leq \tilde Ce^{C\theta k+C\eta}$ for some $C,\tilde C>0$.
Thus, since 
\begin{align*}
{y^{-\eta-1}(y')^\eta I}\sum_{k=1}^N\frac{N!}{k!(N-k)!}\frac{(y')^{k\theta}}{|W(k\theta+\eta)| N^{k}}&\leq 
{y^{-\eta-1}(e^Cy')^\eta I}\sum_{k=1}^N\frac{N!}{k!(N-k)!}\left(\frac{(y')^{\theta}e^{C\theta}}{N}\right)^{k}\\
&={y^{-\eta-1}(e^Cy')^\eta}I\left(\left(1+\frac{e^{C\theta}}{N}\right)^{N}-1\right)\end{align*}
is uniformly bounded, we can apply Lebesgue's dominated convergence theorem to the $k$-sum of \eqref{eq:sum} to obtain
\[\lim_{N\to\infty}  \frac{1}{N^{1/\theta}}K\left(\frac{y}{N^{1/\theta}},\frac{y'}{N^{1/\theta}};\theta,\eta\right)
=\sum_{k=1}^{\infty}\frac{(-1)^{k-1}\theta}{(k-1)!}\frac{(y')^{u_k}}{W(u_k)}\int_{c+i\R}\frac{\dv v}{2\pi i}\frac{W(v)}{\Gamma(1-\frac{v-\eta}{\theta})}\frac{y^{-v-1}}{v-u_k}.\]
This is the same as 
\begin{align*}&\lim_{N\to\infty}\sum_{k=1}^{N}\frac{(-1)^{k-1}\theta}{(k-1)!}\frac{(y')^{u_k}}{W(u_k)}\int_{c+i\R}\frac{\dv v}{2\pi i}\frac{W(v)}{\Gamma(1-\frac{v-\eta}{\theta})}\frac{y^{-v-1}}{v-u_k}\\&\qquad =\lim_{N\to\infty}\int_{\Sigma_N}\frac{\dv u}{2\pi i}\int_{\eta+i\mathbb R}\frac{\dv v}{2\pi i}\frac{W\left(v\right)}{W\left(u\right)}\frac{\Gamma(1-\frac{u-\eta}{\theta})}{\Gamma(1-\frac{v-\eta}{\theta})}\frac{y^{-v-1}(y')^{u}}{v-u}\\
&\qquad =\int_{\Sigma}\frac{\dv u}{2\pi i}\int_{\eta+i\mathbb R}\frac{\dv v}{2\pi i}\frac{W\left(v\right)}{W\left(u\right)}\frac{\Gamma(1-\frac{u-\eta}{\theta})}{\Gamma(1-\frac{v-\eta}{\theta})}\frac{y^{-v-1}(y')^{u}}{v-u},
\end{align*}
where $\Sigma_N$ denotes a contour enclosing $k\theta+\eta$ for $k=1,\ldots, N$ but not enclosing $k\theta+\eta$ for $k=N+1,N+2,\ldots $, and where we used the residue theorem to reconstruct the $u$-integral. We recognize the double integral as $\mathbb K^{\theta,\eta}(y,y')$ and Theorem \ref{thm:MB} is proven.

\appendix

\section{Different expressions for the hard edge Bessel kernel}\label{appendix}

The standard expressions for the hard edge Bessel kernel, see e.g.\ \cite{Forrester_book}, are  
\begin{equation}\label{Besselkernel1}
K_\nu(x,x')
=
\frac{
\sqrt{x'}\, J_\nu(\sqrt{x}) J_\nu'(\sqrt{x'})
-
\sqrt{x}\, J_\nu(\sqrt{x'}) J_\nu'(\sqrt{x})
}{
2(x-x')
},\qquad \nu>-1,\quad x,x'>0,
\end{equation}
and
\begin{equation}\label{Besselkernel2}
K_\nu(x,x')=\frac{1}{4}\int_0^1 J_\nu(\sqrt{xr})J_\nu(\sqrt{x'r})\dv r,\qquad \nu>-1,\quad x,x'>0.
\end{equation}
Here $J_\nu$ denotes the Bessel function of the first kind, which can be expressed as a contour integral known as Sch\"afli's integral~\cite[Formula (10.9.19)]{DLMF},
\begin{equation}\label{Schafli}
J_\nu(\sqrt{z})
=
\frac{1}{2\pi i}\left(\frac{z}{4}\right)^{\nu/2}
\int_{\gamma}
v^{-\nu-1}
e^{v-\frac{z}{4v}}
\, \dv v,\qquad \nu>-1,\ z>0,
\end{equation}
where $\gamma$ encloses the branch cut $(-\infty,0]$ in the counterclockwise direction. Note that for integer $\nu$, the integrand does not have a branch cut, such that one may equivalently take $\gamma$ to be a closed loop around $0$. 
We will see that our expression for the Bessel kernel~\eqref{eq:Besselkernel} is equivalent to \eqref{Besselkernel2} for $\nu\geq 0$.

Setting $u=\frac{z}{4v}$ in \eqref{Schafli}, we obtain 
\begin{equation}\label{Schafli2}
J_\nu(\sqrt{z})
=-
\frac{1}{2\pi i}\left(\frac{z}{4}\right)^{-\nu/2}
\oint_{\Sigma}
u^{\nu-1}
e^{-u+\frac{z}{4u}}\dv u,
\end{equation}
where $\Sigma$ is the image of $\gamma$ under the map $v\mapsto \frac{z}{4v}$ (for $z>0$). In other words, it is a closed loop going through $0$ and not intersecting $(-\infty,0)$, oriented clockwise and smooth except at $0$. This integral is absolutely integrable for $\nu>-1$ if $\Sigma$ approaches the essential singularity $0$ with angles $\pi\pm \theta$, $0\leq \theta<\pi$.

Substituting \eqref{Schafli} and \eqref{Schafli2} into \eqref{Besselkernel2} and changing variables $v=-xt$, $u=-x's$ (with $x,x'>0$), we obtain 
\begin{equation}\label{triple}
K_\nu(x,x')
=-\frac{1}{4}\left(\frac{x'}{x}\right)^{\nu/2}
\int_0^1\dv r \left(\frac{1}{2\pi i}
\int_{\tilde\gamma}
(-t)^{-\nu-1}
e^{-xt+\frac{r}{4t}}\dv t\right)\left(\frac{1}{2\pi i}
\oint_{\tilde \Sigma}
(-s)^{\nu-1}
e^{x's-\frac{r}{4s}}\dv s\right),
\end{equation}
where $\tilde\gamma$ encloses $[0,+\infty)$ with clockwise orientation, and $\tilde\Sigma$ is a closed positively oriented loop going through $0$ and not intersecting $(0,+\infty)$, smooth except at $0$ and approaching $0$ with angles $\pm \theta$, $0\leq \theta<\pi$ (note that we have changed the orientation of both contours).
We moreover choose $\tilde\gamma$ and $\tilde\Sigma$ such that they do not intersect. By Fubini's theorem, we can bring the $r$-integral inside and compute it explicitly. We then obtain 
\begin{equation*}
K_\nu(x,x')
=\left(\frac{x'}{x}\right)^{\nu/2}
\int_{\tilde\gamma}\frac{\dv t }{2\pi i}\oint_{\tilde \Sigma}\frac{\dv s}{2\pi i}
\frac{(-s)^\nu}{(-t)^\nu}\frac{e^{-xt}}{e^{-x's}}\frac{e^{\frac{1}{4t}-\frac{1}{4s}}}{t-s}
-\left(\frac{x'}{x}\right)^{\nu/2}
\int_{\tilde\gamma}\frac{\dv t }{2\pi i}\oint_{\tilde \Sigma}\frac{\dv s}{2\pi i}
\frac{(-s)^\nu}{(-t)^\nu}\frac{e^{-xt}}{e^{-x's}}\frac{1}{t-s}.
\end{equation*}
The second term in the above expression vanishes by Cauchy's theorem because the $s$-integrand is analytic inside $\tilde \Sigma$ and continuous up to the boundary. The first term does not vanish because the $s$-integrand is unbounded inside $\tilde\Sigma$ near the essential singularity at $0$. We finally obtain 
\begin{equation}
K_\nu(x,x')
=\left(\frac{x'}{x}\right)^{\nu/2}
\int_{\tilde\gamma}\frac{\dv t }{2\pi i}\oint_{\tilde \Sigma}\frac{\dv s}{2\pi i}
\frac{(-s)^\nu}{(-t)^\nu}\frac{e^{-xt+\frac{1}{4t}}}{e^{-x's+\frac{1}{4s}}}\frac{1}{t-s}.
\end{equation}
We also observe that the $t$-integrand is $\mathcal O(|t|^{-\nu-1})$ as $t\to\infty$ in the half plane $\Re z\geq c$ for $c>0$, such that we can deform the contour $\tilde\gamma$ to the vertical line $c+i\mathbb R$. Similarly, the $s$-integrand is $\mathcal O(|s|^{\nu})$ as $s\to 0$ with $\Re s> 0$, such that we can deform $\tilde\Sigma$ to a curve passing vertically through $0$.
Up to the cocycle factor $\left(\frac{x'}{x}\right)^{\nu/2}$, this yields precisely our expression \eqref{eq:Besselkernel}.


\section*{Acknowledgments}
The authors are grateful to Peter Forrester and Mario Kieburg for fruitful discussions. TC
acknowledges support by FNRS Research Project T.0028.23 and by the Fonds Sp\'ecial de Recherche
of UCLouvain. JZ acknowledges partial financial support of FWO Odysseus grant No. G0DDD23N and FWO Fellowship Junior 1234325N.

\end{document}